\newcommand*{\cmark}{\ding{51}}
\newcommand*{\xmark}{\ding{55}}
\renewcommand*{\emptyset}{\varnothing}  % prettier empty set symbol
\DeclareMathOperator{\E}{\mathbb{E}}
\DeclareMathOperator{\Var}{Var}
\DeclareMathOperator{\Cov}{Cov}
\DeclarePairedDelimiterX\set[1]\lbrace\rbrace{\mkern1.5mu\def\suchthat{\;\delimsize|\;}#1\mkern1.5mu}
\newcommand*{\Swap}{\textsf{Swap}}
\newcommand*{\Path}{\textsf{Path}}
\newcommand*{\ColorPath}{\textsf{ColorPath}}
\newcommand*{\BatchPath}{\textsf{BatchPath}}
\newcommand*{\Fence}{\textsf{Fence}}
\newcommand*{\DenseSwap}{\textsf{DenseSwap}}
\newcommand*{\DenseJump}{\textsf{DenseJump}}
\newcommand*{\ApproxStatic}{\textsf{ApproxStatic}}
\newcommand*{\ApproxDynamic}{\textsf{Approx}}
\newcommand*{\MinNodeMaxFlow}{\textsf{MinNodeMaxFlow}}
\newcommand*{\tree}{\mathcal{T}}
\newcommand*{\samples}{\mathcal{S}}
\newcommand*{\intervals}{\mathcal{I}}
\newcommand*{\eps}{\varepsilon}
\newcommand*{\epss}{\bar{\eps}}
\newlength{\capheight}
\newcommand{\necklace}[1]{%
    \StrLen{#1}[\strlen]%
    \foreach \x in {1,...,\strlen}{%
        \StrChar{#1}{\x}[\tmp]%
        \,\csname bead\tmp\endcsname\,%
    }%
}
\title{Dynamic Necklace Splitting}
\author{Rishi Advani}
{University of Illinois Chicago, IL, USA}
{radvani2@uic.edu}
{https://orcid.org/0000-0002-5522-0401}
{This author was supported by the UIC University Fellowship.}
\author{Abolfazl Asudeh}
{University of Illinois Chicago, IL, USA}
{asudeh@uic.edu}
{https://orcid.org/0000-0002-5251-6186}
{}
\author{Mohsen Dehghankar}
{University of Illinois Chicago, IL, USA}
{mdehgh2@uic.edu}
{https://orcid.org/0009-0006-1687-8012}
{}
\author{Stavros Sintos}
{University of Illinois Chicago, IL, USA}
{stavros@uic.edu}
{https://orcid.org/0000-0002-2114-8886}
{}
\authorrunning{R. Advani, A. Asudeh, M. Dehghankar, and S. Sintos}
\keywords{Necklace splitting, dynamic algorithms, fair division}
\begin{document}

% LIPIcs: use en dashes intead of em dashes

\maketitle

\begin{abstract}
    The necklace splitting problem is a classic problem in fair division with many applications, including data-informed fair hash maps. We extend necklace splitting to a dynamic setting, allowing for relocation, insertion, and deletion of beads. We present linear-time, optimal algorithms for the two-color case that support all dynamic updates. For more than two colors, we give linear-time, optimal algorithms for relocation subject to a restriction on the number of agents. Finally, we propose a randomized algorithm for the two-color case that handles all dynamic updates, guarantees approximate fairness with high probability, and runs in polylogarithmic time when the number of agents is small.
\end{abstract}

\section{Introduction}
The necklace splitting problem, first introduced by Bhatt and Leiserson~\cite{Bhatt1982}, is a classic combinatorial fair division problem. In this work, we extend it to a dynamic setting, enabling applications such as data-informed fair hash maps and improved load-balancing among multiple servers.

\subsection{Problem Setup}
We are given a string $S$ of $m$ \emph{beads}. We denote the $j$th element in $S$ by $S[j]$ and the substring from the $j_1$th element to the $j_2$th element by $[j_1, j_2]$\footnote{For ease of notation, when $j_1 > j_2$, we write $[j_1, j_2]$ instead of $[j_1, m] \cup [1, j_2]$.}. Each bead is a certain \emph{color} $i \in [n]$. Let $m_i = \lvert\set{b \in S \suchthat b\ \text{is color}\ i}\rvert$. Let $k$ be the number of agents. For simplicity, assume that $k$ divides $m_i$ for all $i$.
The goal is to find a set $P$ of small cardinality consisting of cuts of $S$ with the property that the resulting set of intervals can be allocated to the agents in such a way that each agent receives exactly $m_i/k$ beads of color $i$.
An overview of the notation is given in Table~\ref{tab:notation}.

\begin{table}[bpt]
    \centering
    \caption{Key notation used in the paper.}
    \label{tab:notation}
    \begin{tabular}{cl}
        \toprule
        Notation & Description \\
        \midrule
        $S$ & The string representing the necklace \\
        $P$ & The set of cuts in $S$ \\
        $m$ & The number of beads in $S$ \\
        $m_i$ & The number of beads of color $i$ in $S$ \\
        $n$ & The number of distinct colors of beads in $S$ \\
        $k$ & The number of agents \\
        $G$ & The neighborhood graph associated with $P$ \\
        $T$ & The neighborhood tree associated with $S$ and $P$ \\
        % $A$ & An arbitrary agent \\
        \bottomrule
    \end{tabular}
\end{table}

We are interested in the dynamic case. We are given an instance of the necklace splitting problem solved using an offline algorithm (see Section~\ref{sec:offline}). We allow the following dynamic updates:
\begin{description}
    \item[Relocation] Bead $S[j_1]$ is moved to index $j_2$.
    \item[Insertion] $\alpha k$ beads of the same color are inserted into $S$.
    \item[Deletion] $\alpha k$ beads of the same color are deleted from $S$.
\end{description}
We can also allow for recoloring of beads by performing successive deletion and insertion.

\subsection{Contributions}
Here we give a brief overview of our contributions:
\begin{itemize}
    \item We introduce and formalize the problem of dynamic necklace splitting.
    \item We design a linear-time algorithm for swapping adjacent beads when $n=2$ that achieves the optimal bound of $2(k-1)$ cuts. We design two linear-time algorithms for relocation of arbitrary distance when $n=2$ that achieve the optimal bound of $2(k-1)$ cuts. We also design a linear-time algorithm for relocation with looser bounds for general $n \geq 2$ (Section~\ref{sec:relocation_two_colors}).
    \item We introduce the \MinNodeMaxFlow{} problem and prove it to be NP-complete. We design an approximation algorithm for special cases that we then use as part of an algorithm to efficiently perform batch relocation when $n=2$ with the optimal number of cuts (Section~\ref{sec:relocation_batch}).
    \item We design two linear-time algorithms for relocation with general $n \geq 2$ when $m=nk$ that achieve the optimal bound of $n(k-1)$ cuts (Section~\ref{sec:relocation_multiple_colors}).
    \item We adapt our algorithm for batch relocation to be used for efficient insertion and deletion (Section~\ref{sec:insertion_and_deletion}).
    \item We design a randomized, polylogarithmic-time algorithm for relocation, insertion, and deletion that produces an approximately fair set of cuts with high probability when $n=2$ (Section~\ref{sec:approximate}).
\end{itemize}

See Table~\ref{tab:results} for a comparison of our algorithms.

\begin{table}[bhpt]
    \centering
    \caption{Key details of dynamic algorithms presented in the paper.}
    \label{tab:results}
    \begin{tabular}{ccccc@{}c}
        \toprule
        Algorithm & $n$ & Update & Exact? & Optimal \# of cuts? & Running time (per bead) \\
        \midrule
        \Swap{} & $2$ & Swap & \cmark & \cmark & $O\Bigl(\frac{m}{k}\Bigr)$ \\
        \Path{} & $2$ & Any & \cmark & \cmark & $O\Bigl(k + \frac{k'm}{k}\Bigr)$ \\
        \ColorPath{} & $2$ & Any & \cmark & \cmark & $O\Bigl(k + \frac{k'm}{k}\Bigr)$ \\
        \Fence{} & Any & Relocation & \cmark & \xmark & $O\Bigl(\frac{m}{kn}\Bigr)$ \\
        \BatchPath{} & $2$ & Any & \cmark & \cmark & $O\Bigl(\log k + \frac{k'm}{km'}\Bigr)$ \\
        \DenseSwap{} & $m/k$ & Swap & \cmark & \cmark & $\vphantom{\Bigl(}O(n)$ \\
        \DenseJump{} & $m/k$ & Relocation & \cmark & \cmark & $\vphantom{\Bigl(}O(k + n)$ \\
        \ApproxDynamic{} & $2$ & Any & \xmark & N/A & $\vphantom{\Bigl(}O\bigl(k^2 2^{2k} \eps^{-2} (\log m)^2 + \log m\bigr)$ \\
        \bottomrule
    \end{tabular}
\end{table}

\subsection{Data Structures}
We assume a standard RAM model in which the basic objects manipulated by the algorithms (e.g., beads) occupy a constant number of memory words and the entire input fits in main memory.

The choice of which data structures to use to store information about the necklace has a significant impact on the running time of our algorithms. We implement the necklace itself as a doubly linked list, allowing for efficient dynamic updates. Each node/bead also stores its index, the agent to whom it belongs, and a pointer to the next node/bead belonging to the same agent. Given any bead, this allows us to find the subsequence of beads belonging to the same agent in $O(m/k)$ time.

Note that the cuts are stored implicitly and can be explicitly generated in $O(m)$ time by iterating through the necklace and identifying pairs of consecutive beads where the associated agents switch. Alternatively, with minimal added cost, we can maintain a hash table mapping each possible pair of agents to the set of cuts adjacent to both agents.

For certain algorithms, we need to make use of a \emph{neighborhood graph} $G$ where each vertex represents an agent and two vertices are joined by an edge if the corresponding agents possess adjacent beads. We implement $G$ as an adjacency list. If the beads corresponding to $k'$ agents are reassigned among the same agents, $G$ can be updated in $O(k'm/k)$ time. We remove all edges incident to the $k'$ agents and then determine which edges to add by iterating through the $k'm/k$ beads assigned to those agents.

\subsection{Related Work}
Soon after the necklace splitting problem was originally introduced~\cite{Bhatt1982}, Goldberg and West~\cite{Goldberg1985} proved that a solution always exists if $k=2$. Alon and West~\cite{Alon1986} gave a simpler proof using the Borsuk--Ulam theorem~\cite{Borsuk1933}, and Alon~\cite{Alon1987} generalized the results to $k>2$. Finally, Alon and Graur~\cite{Alon2021} discovered an efficient approximation algorithm for finding a solution with few cuts.
Alon and Graur~\cite{Alon2021} also consider an online variation of the problem. They derive lower and upper bounds on the number of cuts needed when $k=2$ based on the value of $n$. They also generalize some of these results to the $k>2$ case.

In addition to necklace splitting, other fair division problems can be extended to dynamic settings. Kash~et~al.~\cite{Kash2014} study a fair division problem with divisible goods where agents with Leontief preferences arrive over time and goods must be irrevocably allocated. Benade~et~al.~\cite{Benade2018} study the problem of allocating indivisible goods that arrive in an online manner, again with irrevocable decisions. He~et~al.~\cite{He2019} generalize this to a setting where reallocation is allowed but expensive.

\subsection{Applications}
In this section, we discuss several key applications of the dynamic necklace splitting problem.

\subsubsection{Fair Hash Maps}
A core application of dynamic necklace splitting is designing practical fair hash maps. In addition to ensuring fairness, such data-informed hash maps can even be faster than traditional hash maps~\cite{Kraska2018, Sabek2022}. Shahbazi~et~al.~\cite{Shahbazi2024} use the static necklace splitting problem to design hash maps that satisfy group fairness. However, making dynamic updates to these hash maps is infeasible without a dynamic solution to the necklace splitting problem.

For a concrete application, consider the use of fair hash maps to maintain user information on a social network. Whenever new users join the network, their data needs to be efficiently added to the hash map, and when users leave, their data needs to be efficiently removed. As users' attributes change, their data needs to be efficiently updated (corresponding to relocation of beads in a necklace).

Another concrete application involves table joins in data lakes. Consider an organization seeking to share its data with third parties. The data is stored in a data lake and includes sensitive information as primary keys for certain tables. To protect user privacy, the data needs to be hashed, which introduces the risk of hash collision. A fair hashing scheme must be used to ensure any resulting errors when joining on the hashed columns do not disproportionately impact a specific demographic group~\cite{Shahbazi2024}. For this system to be efficient, it needs to be easy to update the hashes as the data changes.

\subsubsection{Load-Balancing}
Consider the problem of load-balancing among multiple servers, where tasks need to be completed in a specific order. Each agent corresponds to a server, and each bead color corresponds to a different type of task. We want to spread the load across the servers evenly, but we also want to minimize communication costs~\cite{JafarnejadGhomi2017}. This necessitates having as few cuts as possible.

If the order of tasks needs to be modified, new tasks need to be added, or existing tasks need to be canceled, we would want to be able to update the computation plan without recreating it from scratch.
As a special case, if computation has already begun, we can still perform dynamic updates that don't affect the already-completed tasks using dynamic necklace splitting.

\subsubsection{Bucketization}
Another key application is ensuring minority representation in bucketization. For example, when partitioning data for machine learning tasks, it is crucial that the training data is accurately represented by the test data for reliable results~\cite{Shahbazi2023}. Intentionally designing the bucketization process to maintain fairness helps prevent downstream fairness issues~\cite{Mehrabi2021, Pessach2022}. Here, each agent represents a bucket, and the color of each bead represents the grouping attribute. If the ``red'' data is more sparse than the ``blue'' data, it is important that the relative proportions are preserved to have accurate results.

\section{Offline Algorithm}
\label{sec:offline}
We present a simplified version of the algorithm of Shahbazi~et~al.~\cite{Shahbazi2024} here, as it is used as a subroutine in our algorithms for the $n = 2$ case.
Without loss of generality, we refer to the colors as red and blue.
Initialize a doubly linked list $L$ such that, for each $j \in [0, m-1]$, $L[j]$ is the number of red beads in the substring $[j, j+m/k-1]$. Initialize a hash set $H$ that contains all indices $j$ where $L[j] = m_1/k$. This takes $O(m)$ time.
By the discrete intermediate value theorem, there is a substring with $m_1/k$ red beads and $m_2/k$ blue beads, so $H$ is nonempty. We remove the smallest index $j$ from $H$ and allocate the sublist of length $m/k$ beginning at $j$ in $L$ to the first agent. We remove all indices within $m/k-1$ beads of $j$ in $L$ from $H$, update the values of $L$ for $m/k-1$ beads preceding $j$, add new indices to $H$ as necessary, and update the element preceding $j$ in $L$ to point to the next unallocated bead.
We repeat this process $k-1$ more times to allocate the remaining beads. Each step takes $O(m/k)$ time, so in total, the algorithm takes $O(m)$ time.
When we use this algorithm as a subroutine, we will often run it on a subsequence of length $k' m/k$ with only $k'$ agents. For this special case, the algorithm takes only $O(k' m/k)$ time.

As shown by Alon and Graur~\cite{Alon2021}, this algorithm produces at most $2(k-1)$ cuts.
It is also known that no algorithm can guarantee fewer than $2(k-1)$ cuts\footnote{See Appendix~\ref{sec:offline_optimal} for proof.}.

We now walk through a sample execution of the offline algorithm.

\begin{example}
    Suppose we are splitting the following necklace between three agents.
    \begin{center}
        \necklace{RRBRRBBBRBRB}
    \end{center}
    Each agent needs to receive two red beads and two blue beads. The first interval with the correct number of beads of each color is assigned to the first agent.
    \begin{center}
        \necklace{RRcBRRBcBBRBRB}
    \end{center}
    The second agent receives the first, second, seventh, and eighth beads, and the third agent receives the remaining beads.
    \begin{center}
        \necklace{RRcBRRBcBBcRBRB}
    \end{center}
    Each agent receives two red beads and two blue beads in total.\lipicsEnd  % sentence added to resolve spacing issues
\end{example}

\section{Relocation with Two Colors}
\label{sec:relocation_two_colors}
We now study the dynamic update of relocation with two colors.

\subsection{Adjacent Indices}
First, we consider the case where we restrict ourselves to relocating a bead to an adjacent index. Notice that this is equivalent to swapping two adjacent beads. For brevity, we will refer to relocation from $j$ to $j+1$ as swapping beads $S[j]$ and $S[j+1]$.

We now show how we can maintain a valid set of cuts after swapping two adjacent beads $S[j]$ and $S[j+1]$. If they are the same color or belong to the same agent, we can maintain the same set of cuts $P$. The only interesting case is when $S[j]$ and $S[j+1]$ are different colors and belong to different agents. Let $A_1$ be the owner of $S[j]$ and $A_2$ the owner of $S[j+1]$. Consider the subsequence of beads $S' \subseteq S$ belonging to either $A_1$ or $A_2$. We remove from $P$ the cuts adjacent to both $A_1$ and $A_2$. We run the offline algorithm on $S'$ (in $O(m/k)$ time) and add the (at most two) new cuts to $P$. We will henceforth refer to this procedure as \Swap{}.

% there is probably no faster algorithm

Next, we prove upper bounds on the number of cuts in $P$ after the update. We start with a relatively trivial bound.

\begin{proposition}
\label{prop:swap_cuts_1}
    After $r$ swaps, \Swap{} produces a set of cuts of size at most $2(k-1) + r$.
\end{proposition}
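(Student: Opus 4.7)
The plan is to proceed by induction on the number of swaps $r$. The base case $r=0$ follows from the offline algorithm's guarantee of at most $2(k-1)$ cuts. For the inductive step, I would show that each invocation of \Swap{} increases $|P|$ by at most $1$, whence $r$ swaps add at most $r$ cuts.

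To establish the per-swap bound, I would partition into the easy cases (where $S[j]$ and $S[j+1]$ are the same color, or belong to the same agent) and the nontrivial case. In the easy cases, $P$ is unchanged, so the bound holds trivially. In the nontrivial case, \Swap{} removes from $P$ the set $C$ of cuts adjacent to both $A_1$ and $A_2$ and then adds at most $2(2-1)=2$ new cuts by running the offline algorithm on the two-agent subsequence $S'$. So the net change is at most $2 - |C|$, and it suffices to show $|C| \ge 1$.

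The key observation — which I expect to be the only real step needing attention — is that the cut between positions $j$ and $j+1$ lies in $C$ before the update. This holds because $S[j]$ belongs to $A_1$ and $S[j+1]$ belongs to $A_2$ by hypothesis, so these adjacent beads are assigned to different agents and the boundary between them is a cut adjacent to both $A_1$ and $A_2$. Therefore $|C| \ge 1$, and the net change per swap is at most $1$.

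Combining the base case and the per-swap bound, after $r$ swaps we obtain $|P| \le 2(k-1) + r$, which is the claim. The argument is essentially a bookkeeping exercise; the only subtlety is ensuring that some cut is always destroyed by the swap, which follows from the precondition that $A_1 \neq A_2$ holds for the two beads being exchanged.
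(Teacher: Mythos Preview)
Your proposal is correct and follows essentially the same approach as the paper's proof: both argue that each swap removes at least one cut (the one between positions $j$ and $j+1$, guaranteed by $A_1 \neq A_2$) and adds at most two, so the net increase per swap is at most one. Your version is slightly more explicit in spelling out the induction and the trivial cases, but the core argument is identical.
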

\begin{proof}
    During each update, there is at least one cut adjacent to both $A_1$ and $A_2$. We remove this cut and add at most two cuts. Thus, with each update, we add at most one new cut. After $r$ swaps, we will have at most $2(k-1) + r$ cuts.
\end{proof}

In fact, we can show that no extra cuts are needed.

\begin{theorem}
\label{thm:swap_cuts_3}
    \Swap{} produces a set of cuts of size at most $2(k-1)$ and has time complexity $O(m/k)$.
\end{theorem}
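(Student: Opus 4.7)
The time bound is immediate: $|S'| = 2m/k$, so running the offline algorithm on $S'$ with $k' = 2$ agents takes $O(k'm/k) = O(m/k)$ by the remark at the end of Section~\ref{sec:offline}, and the bookkeeping (identifying and removing the old $(A_1,A_2)$-cuts in $P$ by traversing $S'$, and inserting the at-most-two new cuts) also fits within $O(m/k)$.

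For the bound $|P| \leq 2(k-1)$ I would proceed by induction on the number of swaps, with the offline output as the base case. In the inductive step I partition the cuts of $P$ by the pair of agents on either side into three classes: (i) both agents outside $\{A_1, A_2\}$, (ii) exactly one in $\{A_1, A_2\}$, and (iii) both in $\{A_1, A_2\}$. Because \Swap{} preserves the allocation of every agent other than $A_1$ and $A_2$, class-(i) and class-(ii) cuts keep their $S$-positions and their counts do not change -- the specific $A_1$-vs-$A_2$ label on one side of a class-(ii) cut may flip, but it remains a cut since the other side is neither $A_1$ nor $A_2$. Hence only class-(iii) counts -- those of the $(A_1,A_2)$-cuts -- can vary; let $c_{\text{old}}$ and $c_{\text{new}}$ denote them before and after the update. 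One has $c_{\text{old}} \geq 1$ because the swap index itself contributes such a cut, and $c_{\text{new}} \leq 2$ because offline on $S'$ with $2$ agents yields at most $2(k'-1) = 2$ cuts in $S'$, each surviving as an $S$-cut only when its two $S'$-adjacent endpoints are also $S$-adjacent. So $|P_{\text{new}}| - |P_{\text{old}}| = c_{\text{new}} - c_{\text{old}} \leq 1$, already recovering Proposition~\ref{prop:swap_cuts_1}.

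The hard part is tightening this gain of at most $1$ to a gain of at most $0$, thereby ruling out the edge case $c_{\text{old}} = 1$ and $c_{\text{new}} = 2$ coinciding with $|P_{\text{old}}| = 2(k-1)$. My plan is to exploit the identity $|P| = c_{12} + O + B_O$, where $B_O$ is the total number of blocks of the other $k-2$ agents and $O$ is the number of maximal runs of other-agent positions in $S$ between consecutive positions of $S'$ (so that $2m/k - O$ equals the number of ``bridges'' in $S'$, i.e., $S'$-adjacencies that are also $S$-adjacent). Both $B_O$ and $O$ are invariant under \Swap{}, so the identity makes the accounting tight. The delicate step, and the main obstacle I anticipate, is showing that whenever the offline call on $S'$ places both of its cuts at bridges -- the only way to achieve $c_{\text{new}} = 2$ -- the combinatorial structure of $S'$ forces $c_{\text{old}} \geq 2$ as well, or else forces $|P_{\text{old}}|$ to be strictly below $2(k-1)$. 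I expect this bridge-based structural analysis to be the most technically involved part of the proof.
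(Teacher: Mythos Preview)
Your time-complexity argument and your three-class decomposition of the cuts are both fine, and the observation that only $c_{12}$ can change (so $|P|_{\text{new}} - |P|_{\text{old}} = c_{\text{new}} - c_{\text{old}}$) is correct and cleanly recovers Proposition~\ref{prop:swap_cuts_1}. The problem is that the proof stops exactly at the point where the content lies: you explicitly flag the ``bridge-based structural analysis'' as unfinished, and I do not see how to carry it out with the induction hypothesis you have chosen. Your hypothesis is merely $|P| \leq 2(k-1)$; to complete the step you would have to show that no allocation satisfying only this bound can have $c_{\text{old}} = 1$ while the offline call on $S'$ lands both of its cuts on bridges. That is a statement about the \emph{structure} of the current allocation, and the numerical bound $|P| \leq 2(k-1)$ alone does not seem to supply enough structure to force it.

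The paper sidesteps this by carrying a much stronger invariant through the induction: after every swap, the global allocation is still one that could have been produced by \emph{some} execution order of the offline algorithm. Under this invariant each agent can be charged at most two cuts (the last agent zero), and because $A_1$ and $A_2$ are adjacent one can reorder the hypothetical offline execution so that their turns are consecutive; then removing $A_1$'s (at most two) charged cuts and re-running offline on $S'$ keeps the total at $2(k-1)$ and, crucially, preserves the invariant. In effect the paper strengthens the induction hypothesis from a cut count to a structural property, which is exactly what your bridge analysis would need but does not yet have. If you want to push your counting approach through, you will almost certainly have to strengthen your hypothesis to something of this kind; at that point the argument converges to the paper's.
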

\begin{proof}
    For the first update, we can assume that the original allocation has the structure of one given by an execution of the offline algorithm (up to order of interval selection). In addition, we will show that each update preserves that invariant, allowing us to use those properties in the proof of the bound.

    Assume without loss of generality that $A_1$ is allocated beads before $A_2$.
    Notice that the order in which agents are assigned beads can, in some cases, be altered without affecting the resulting allocations. In particular, since $A_1$ shares a common boundary with $A_2$, there can be no agent whose allotted set of beads encloses that of $A_1$ but not $A_2$. Thus, we can conceptually alter the order of execution of the original offline algorithm without loss of generality such that $A_1$ was allocated beads immediately before $A_2$. The only changes we have to make are in the cases where a cut is added during $A_1$'s turn that is adjacent to an agent $A_3$ whose beads are allocated after $A_1$ but before $A_2$. In those cases, we add the cut during $A_3$'s turn instead.

    Next, notice that each agent has two cuts allotted in the offline bound of $2(k-1)$ cuts (with the exception of the final agent, who has none). Thus, during each update, we can simply remove the cuts corresponding to $A_1$ (resulting in at most $2(k-2)$ total cuts) and add new cuts by rerunning the offline algorithm for the beads originally allocated to $A_1$ and $A_2$ (returning to the bound of $2(k-1)$). The only special case is when $A_1$ is not enclosed within $A_2$ and has no other adjacent agent at the time of allocation. In that case, we only remove one of two cuts corresponding to $A_1$, but we may add two cuts back. However, if we add two cuts, we can note that $A_2$ only has one corresponding cut, so $A_1$ can ``donate'' one cut to $A_2$, maintaining the property that each non-final agent has at most two corresponding cuts.
    
    Finally, we note that none of the steps in the update process violate the stated invariant that the allocation is one that could have been generated by an execution of the offline algorithm (up to order of interval selection), concluding our proof.
\end{proof}

\subsection{Nonadjacent Indices}
Next, we consider the case where beads need to be relocated an arbitrary distance away. Let $A_1$ be the owner of $S[j_1]$ and $A_2$ the owner of $S[j_2]$.

\subsubsection{The \Path{} Algorithm}
If the distance between $A_1$ and $A_2$ in the neighborhood graph $G$ is sufficiently small, we can efficiently perform relocations without adding any extra cuts. First, we find a shortest path between $A_1$ and $A_2$ in $G$. Let $k'$ be the number of agents along that path. Then, we move $S[j_1]$ to index $j_2$. Finally, we rerun the offline algorithm on the substring belonging to the $k'$ agents. We will henceforth refer to this procedure as \Path{}.

\begin{theorem}
\label{thm:path_cuts}
    \Path{} produces a set of cuts of size at most $2(k-1)$ and has time complexity $O(k + k'm/k)$.
\end{theorem}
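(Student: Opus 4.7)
The time-complexity claim follows directly from facts already assembled in the text. Since $|E(G)|$ is bounded by the current number of cuts, which in turn is at most $2(k-1)$ by the invariant inherited from Theorem~\ref{thm:swap_cuts_3}, breadth-first search from $A_1$ finishes in $O(k)$ time. Relocating $S[j_1]$ in the doubly linked list takes $O(1)$, and the final invocation of the offline algorithm on a subsequence of length $k'm/k$ with only $k'$ agents costs $O(k'm/k)$ by the special-case bound noted in Section~\ref{sec:offline}.

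For the cut bound, the plan is to extend the invariant-based strategy of Theorem~\ref{thm:swap_cuts_3}. We carry the invariant that, immediately before each update, the current allocation is realizable as the output of some execution of the offline algorithm (up to the order in which intervals are selected). Write the shortest path returned by BFS as $B_0 = A_1, B_1, \ldots, B_{k'-1} = A_2$. Because consecutive vertices along the path share an edge of $G$, the corresponding agents share a necklace cut and are therefore neighbors in the necklace. The first step is to iterate the Swap-style reordering argument along the path: any agent $A_3$ processed between some $B_i$ and $B_{i+1}$ in the offline ordering cannot have its interval enclose $B_i$ without also enclosing $B_{i+1}$, so $A_3$ may be pushed past $B_i$ without altering the resulting allocation. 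Repeating this yields an equivalent offline execution in which $B_0, B_1, \ldots, B_{k'-1}$ are processed in consecutive steps.

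Once the $k'$ path agents form a contiguous block in the offline ordering, the cuts of the original allocation split naturally into those attributable to path agents and those attributable to the remaining $k-k'$ agents. The update reduces to the self-contained operation of deleting the path-attributed cuts, moving the bead, and rerunning the offline algorithm on the $k'$ agents' subsequence of length $k'm/k$. The rerun contributes at most $2(k'-1)$ cuts for this sub-instance. A short case analysis depending on whether the overall ``final agent'' of the original offline execution was itself a path agent (in which case path cuts before and after stay bounded by $2(k'-1)$) or an external agent (in which case they drop from $2k'$ to at most $2(k'-1)$, strictly better) shows that the total cut count never exceeds $2(k-1)$, and the invariant is preserved for the next update.

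The main obstacle I anticipate is rigorizing the reordering step: the Swap proof needed only a single local transposition in the offline ordering, whereas here we must compose $\Theta(k')$ such transpositions along the entire path and verify that each one both preserves the resulting allocation and leaves intact the non-enclosure properties required for the next transposition in the chain. Once this composition is justified, the remainder is a direct accounting generalization of Theorem~\ref{thm:swap_cuts_3}.
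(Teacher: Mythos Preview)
Your proposal is correct and follows essentially the same approach as the paper: carry the offline-realizability invariant, use the shared-boundary/non-enclosure observation to reorder the $k'$ path agents into a consecutive block of the offline execution, then delete their attributed cuts and rerun the offline algorithm on the length-$k'm/k$ subsequence. Your explicit case split on whether the global ``final agent'' lies on the path is slightly more detailed than the paper's accounting (which simply asserts at most $2(k-k')$ cuts remain after deletion), but it is the same argument.
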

\begin{proof}
	Note that the number of edges in $G$ is at most the number of cuts. Thus, finding a shortest path takes $O(k)$ time using breadth-first search, giving a running time of $O(k + k' m/k)$.
	 
    For the first update, we can assume that the original allocation has the structure of one given by an execution of the offline algorithm (up to order of interval selection). In addition, we will show that each update preserves that invariant, allowing us to use those properties in the proof of the bound.

    Each agent $A'$ in the path shares a common boundary with the next agent in the path, $A''$, so there can be no agent whose allotted set of beads ``encloses'' that of $A'$ but not $A''$. Without loss of generality, this enables us to conceptually alter the order of execution of the original offline algorithm such that $A'$ and $A''$ were assigned beads in consecutive turns. Following the same logic, we can consider all of the $k'$ agents to have had consecutive turns.

    Next, notice that each agent has two cuts allotted in the offline bound of $2(k-1)$ cuts (with the exception of the final agent, who has none). Thus, during each update, we can simply remove the cuts corresponding to the $k'$ agents (resulting in at most $2(k-k')$ total cuts) and add new cuts by rerunning the offline algorithm for the beads originally allocated to those agents (returning to the bound of $2(k-1)$).
    
    Finally, we note that none of the steps in the update process violate the stated invariant that the allocation is one that could have been generated by an execution of the offline algorithm (up to order of interval selection), concluding our proof.
\end{proof}

In many practical applications, the distance that beads need to be relocated is relatively small, making \Path{} very efficient. We now show that if relocation positions are drawn uniformly at random, we can bound the length of a path in $G$ with high likelihood.

\begin{lemma}
	\label{lemma:linear_graph}
	The average distance between a node $u$ and each other node in a connected graph with $k$ nodes is at most $k/2$.
\end{lemma}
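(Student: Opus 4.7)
The plan is to reduce the claim to the case where the connected graph is a tree, and then prove the tree case by a short ordering argument. The reduction is free: let $T$ be any spanning tree of $G$ rooted at $u$, and note that for every $v$ we have $d_G(u,v) \le d_T(u,v)$, since the unique $u$--$v$ path in $T$ is one valid walk in $G$. Thus it suffices to show that the average of $d_T(u,v)$ over the $k-1$ non-$u$ nodes is at most $k/2$, or equivalently that $\sum_{v \ne u} d_T(u,v) \le k(k-1)/2$.

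For the tree step, I would sort the non-$u$ nodes as $v_1, v_2, \ldots, v_{k-1}$ in nondecreasing order of $d_T(u,\cdot)$, breaking ties arbitrarily, and prove the pointwise bound $d_T(u, v_i) \le i$. The argument: set $d = d_T(u, v_i)$ and consider the unique $u$--$v_i$ path $u = w_0, w_1, \ldots, w_d = v_i$ in $T$. The $d - 1$ intermediate vertices $w_1, \ldots, w_{d-1}$ are distinct non-$u$ nodes whose distances from $u$ are exactly $1, 2, \ldots, d-1$, all strictly less than $d$; hence in the sorted ordering each of them precedes $v_i$, giving $i - 1 \ge d - 1$, i.e., $i \ge d$. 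Summing over $i$ yields $\sum_{i=1}^{k-1} d_T(u, v_i) \le \sum_{i=1}^{k-1} i = k(k-1)/2$, and dividing by $k-1$ gives the desired average bound of $k/2$.

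The main obstacle here is essentially just writing the ordering step carefully, since one has to be clear about why the intermediate path vertices are forced to appear earlier in the sorted list (they have strictly smaller distance, not just smaller-or-equal). As a safety net in case that presentation feels fiddly, I have a backup proof by induction on $k$: the $k=1$ case is trivial, and for the step one removes any leaf $\ell$ of the rooted spanning tree $T$; since the depth of any vertex in an $k$-node tree is at most $k-1$, the inductive hypothesis applied to $T - \ell$ yields $\sum_{v \ne u, v \ne \ell} d_T(u,v) \le (k-1)(k-2)/2$, and adding $d_T(u,\ell) \le k-1$ gives the total $\le k(k-1)/2$. Either route closes the lemma, and both tightly match the path graph, where $u$ is an endpoint and the average distance is exactly $k/2$.
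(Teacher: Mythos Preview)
Your proof is correct, and the central combinatorial observation is the same one the paper uses: if some node is at distance $d$ from $u$, then the intermediate vertices on a shortest path furnish nodes at every smaller distance $1,\ldots,d-1$. The paper, however, deploys this observation more economically. It works directly in the connected graph (no spanning-tree reduction), lets $d$ be the \emph{maximum} distance from $u$, notes that this forces one node at each distance $1,\ldots,d$, bounds the remaining $k-1-d$ nodes by $d$, and then checks
\[
\frac{\sum_{\kappa=1}^{d}\kappa + (k-1-d)d}{k-1}\;\le\;\frac{\sum_{\kappa=1}^{k-1}\kappa}{k-1}\;=\;\frac{k}{2}.
\]
Your sorting argument refines this to the pointwise bound $d(u,v_i)\le i$ for every $i$, which is a nice statement in its own right but not needed here; and your spanning-tree step is harmless but superfluous, since the intermediate-vertex argument already works on any shortest path in $G$. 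Your inductive backup is also fine. In short: same idea, slightly heavier packaging on your side.
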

\begin{proof}
	Let $d$ be the greatest distance between $u$ and any other node. The existence of such a node implies the existence of nodes at distances $d-1$, $d-2$, etc. from $u$ as well. The remaining $k-1-d$ nodes are at distance at most $d$ from $u$ by assumption. The average distance between $u$ and each other node can be bounded as follows.
	\[\frac{\sum_{\kappa=1}^d \kappa + (k-1 - d) d}{k-1} \leq \frac{\sum_{\kappa=1}^d \kappa + \sum_{\kappa=d+1}^{k-1} \kappa}{k-1} = \frac{\sum_{\kappa=1}^{k-1} \kappa}{k-1} = \frac{k(k-1)}{2(k-1)} = \frac{k}{2}\qedhere\]
\end{proof}

\begin{proposition}
	\label{prop:linear_graph}
	The distance between two distinct nodes in $G$ selected uniformly at random is at most $\lceil k/2 \rceil$ with probability at least 75\%.
\end{proposition}
\begin{proof}
	First, we show that the connected graph with maximum average path length is the linear graph. Note that the average path length of any graph is at most that of one of its spanning trees, so we can restrict ourselves to trees. We prove the claim by induction.
	
	For the base case, our claim trivially holds for $k < 3$. Assume the claim holds for some $k = \mu$. We show that it holds for $k = \mu + 1$. Consider the linear graph with $\mu + 1$ nodes. The subgraph induced by the first $\mu$ nodes has maximum average path length by assumption. The average distance between the last node and each other node is $k/2$, so by Lemma~\ref{lemma:linear_graph}, it cannot be increased. Therefore, the claim holds for $k = \mu + 1$, and by induction, it holds for all $k$.
	
	Next, we verify that it is possible for the neighborhood graph to be linear. One possible necklace that gives rise to a linear neighborhood graph is the following.
	
	\begin{center}
		\necklace{RBcRBcRBcRBcRBcRB}
	\end{center}
	
	Finally, we analyze the lengths of shortest paths in a linear graph. Suppose $G$ is a linear graph and two (distinct) nodes are selected uniformly at random. The distance between them follows a triangular distribution. The length is at most $\lceil k/2 \rceil$ with probability at least
	\[1 - \frac{\sum_{\kappa=1}^{k/2-1} \kappa}{\sum_{\kappa=1}^{k-1} \kappa} = 1 - \frac{k(k-2)}{4k(k-1)} = \frac{3k-2}{4k-4} > \frac{3k-3}{4k-4} = 75\% \,.\qedhere\]
\end{proof}

% it's very hard to do the analysis for average-case neighborhood graphs, and the problem is made even more intractable by the fact that the structure might change after dynamic updates.

\subsubsection{The \ColorPath{} Algorithm}
Without loss of generality, assume $S[j_1]$ is red.
We now construct a more intricate neighborhood graph, $G'$. Unlike $G$, this graph is directed and weighted. For each pair of agents $A'$ and $A''$, there is an edge from $A'$ to $A''$ if the two agents possess adjacent beads. If any such pair of beads has a red bead on the side of $A'$, we call the edge \emph{good}, and its weight is $0$; otherwise, we call it \emph{bad}, and its weight is $1$.

% it is possible for all edges to be bad (e.g., the neighborhood graph corresponding to the necklace BRRB|BRRB|BRRB)

First, we find a shortest path from $A_2$ to $A_1$ in $G'$. Let $k'$ be the number of agents along that path. Then, we move $S[j_1]$ to index $j_2$. Next, for each good edge, we move the corresponding red bead across the adjacent cut. For each subpath of bad edges, we rerun the offline algorithm on the substring corresponding to the subpath. We will henceforth refer to this procedure as \ColorPath{}.

Note that the number of edges in $G'$ is at most twice the number of cuts. Finding a shortest path takes $O(k)$ time using Dial's algorithm~\cite{Dial1969}, giving a running time of $O(k + k' m/k)$ for \ColorPath{}. Asymptotically, this is the same as \Path{}, but in practice, \ColorPath{} will likely be faster.

% \begin{proposition}
%     \ColorPath{} correctly maintains the fairness constraint.
% \end{proposition}

\begin{proposition}
\label{prop:colorpath_cuts}
    \ColorPath{} produces a set of cuts of size at most $2(k-1)$ and has time complexity $O(k + k'm/k)$.
\end{proposition}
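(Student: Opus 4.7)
The approach is to mirror the strategy used for Theorem~\ref{thm:path_cuts}: maintain the invariant that the current allocation could have been produced by an execution of the offline algorithm (up to the order of interval selection), and verify that each edit made by \ColorPath{} preserves both this invariant and the $2(k-1)$ cut bound. Let the shortest path returned in $G'$ be $A_2 = B_0 \to B_1 \to \cdots \to B_{k'-1} = A_1$. Because consecutive $B_i, B_{i+1}$ share a necklace boundary, the reordering argument from Theorem~\ref{thm:path_cuts} lets us assume without loss of generality that these $k'$ agents were allocated in consecutive turns of the offline execution.

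I would then analyze the two edge types separately. For a good edge $B_i \to B_{i+1}$, by definition some red bead sits on $B_i$'s side of their common cut, and sliding that bead across the cut changes ownership but preserves the cut itself; the resulting allocation is still consistent with an offline execution whose interval boundary between $B_i$ and $B_{i+1}$ is drawn one position over. For each maximal run of bad edges, which corresponds to a consecutive block of agents, I would rerun the offline algorithm on their combined substring; by the same bookkeeping as in Theorem~\ref{thm:path_cuts}, the cuts owned by these agents are replaced by no more than what the offline bound permits, so the global count remains within $2(k-1)$. Together, these two mechanisms propagate the one-bead red surplus that relocating $S[j_1]$ to $j_2$ creates at $A_2$ step-by-step along the path until it is absorbed at $A_1$, restoring the color balance at every agent.

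For the running time, Dial's algorithm on $G'$ (which has $O(k)$ vertices, $O(k)$ edges, and weights in $\{0,1\}$) takes $O(k)$ time; each good-edge bead swap takes $O(1)$ via the linked-list pointers; and the offline re-executions on bad subpaths have combined substring length at most $k' m / k$, for a total of $O(k + k' m / k)$. The main subtlety I expect is verifying global consistency when good-edge swaps and offline reruns alternate along the path: each good edge shifts the surplus one step forward, while each bad subpath's offline rerun must be shown to accept the surplus at its starting end and deliver it to its other end, so that the sequence of operations composes into a globally valid allocation that still matches the offline-structure invariant used throughout the argument.
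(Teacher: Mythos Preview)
The paper does not supply a proof for Proposition~\ref{prop:colorpath_cuts}; it is stated immediately after the algorithm description and running-time discussion, with the implicit understanding that the argument parallels Theorem~\ref{thm:path_cuts}. Your proposal follows exactly this parallel and is the right approach.

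The subtlety you flag at the end---how good-edge swaps and bad-subpath reruns compose---has a clean resolution that you stop just short of stating. If you perform \emph{all} good-edge swaps first (as the paper's phrasing ``Next, for each good edge \dots. For each subpath of bad edges \dots'' suggests), then at the endpoints of every maximal bad subpath $B_i \to \cdots \to B_j$ the surplus and deficit are already in place: $B_i$ carries a $+1$ red surplus (either because $i=0$ and it is $A_2$, or because the preceding good edge $B_{i-1}\to B_i$ just deposited one), and $B_j$ carries a $-1$ red deficit (either because $j=k'-1$ and it is $A_1$, or because the succeeding good edge $B_j\to B_{j+1}$ just withdrew one). The interior agents of the subpath are untouched. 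Hence the subpath's combined bead set has exactly $(j-i+1)m/k$ beads with the correct color counts, and the offline rerun applies verbatim---there is no need for the rerun to ``accept a surplus and deliver it to the other end.'' With this ordering observation in hand, the cut-count bookkeeping and the offline-structure invariant go through exactly as in Theorem~\ref{thm:path_cuts}, and your running-time analysis is already correct as stated.
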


\subsubsection{The \Fence{} Algorithm}
If the number of agents $k'$ in the path is large, we may instead wish to add extra cuts in return for a reduced running time. After moving $S[j_1]$ to index $j_2$, instead of rerunning the offline algorithm, we simply add (at most two) cuts around it as necessary. We will henceforth refer to this procedure as \Fence{}.

The main disadvantage of \Fence{} is that the resulting allocation is not guaranteed to be one that could have been generated by an execution of the offline algorithm. This means that, after a single execution of \Fence{}, the guarantees of Theorems~\ref{thm:swap_cuts_3}~and~\ref{thm:path_cuts} (and Proposition~\ref{prop:colorpath_cuts}) no longer hold\footnote{Proposition~\ref{prop:swap_cuts_1} can technically still be made to work with some adjustments. After usage of \Fence{}, a modified Proposition~\ref{prop:swap_cuts_1} would guarantee that the number of cuts increases by one with each execution of \Swap{}.}. As such, once \Fence{} has been used, any further relocations must be performed using \Fence{} as well. Furthermore, once the guaranteed number of cuts has passed some user-determined tolerance level, the whole necklace should be reallocated from scratch using the offline algorithm.

Suppose $2k$ extra cuts are permitted to be added via relocations before reallocation from scratch is required. Consider the sequence of relocations starting from the first execution of \Fence{} to the first relocation that leads to over $2k$ extra cuts. The amortized running time of \Fence{} (including the time needed for the reallocation from scratch) is then
\[O(1) + \frac{O(m)}{\Omega(k)} = O\biggl(\frac{m}{k}\biggr) \,.\]
As such, \Fence{} may be preferred over \Path{} and \ColorPath{} if $k'$ is frequently large.

We can also adapt \Fence{} to the case of $n > 2$ in a straightforward way. Using the offline algorithm of Alon and Graur~\cite[Theorem~5]{Alon2021} (which has a running time of $O(m)$), we can guarantee at most $n(k-1)\lceil 4+\log_2(3k \max_{i \in [n]} m_i) \rceil + 2r$ cuts after $r$ relocations. If $2kn$ extra cuts are permitted to be added before reallocation from scratch is required, the amortized running time is then
\[O(1) + \frac{O(m)}{\Omega(kn)} = O\biggl(\frac{m}{kn}\biggr) \,.\]

\begin{proposition}
\label{prop:fence_cuts}
    After $r$ relocations, \Fence{} produces a set of cuts of size at most $2(k+r-1)$ if $n=2$ and $n(k-1)\lceil 4+\log_2(3k \max_{i \in [n]} m_i) \rceil + 2r$ if $n>2$ and has time complexity $O\Bigl(\frac{m}{kn}\Bigr)$.
\end{proposition}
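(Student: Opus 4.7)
My plan is to separate the two claims: the cut bound, which is a structural statement about the effect of a single \Fence{} update, and the amortized running-time bound, which comes from the tolerance-and-reallocate strategy sketched in the paragraphs above the proposition.

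For the cut bound, I would start from an initial allocation produced by the offline algorithm, giving at most $2(k-1)$ cuts when $n=2$ (Section~\ref{sec:offline}) and at most $n(k-1)\lceil 4+\log_2(3k \max_{i \in [n]} m_i) \rceil$ cuts when $n>2$ by the algorithm of Alon and Graur~\cite{Alon2021}. The core of the argument is that each \Fence{} call adds at most two cuts; induction on $r$ then yields the stated bounds. To verify the per-call increment I would do a small case analysis on the owners of the beads neighboring the two affected sites. Removing $S[j_1]$ replaces two consecutive owner-boundaries with a single one, which can only preserve or decrease the cut count (enumerating the owners at positions $j_1-1,\, j_1,\, j_1+1$ yields at most two original cuts collapsing to at most one new cut). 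Reinserting the moved bead at $j_2$, keeping it assigned to its original agent, replaces one owner-boundary with two, with at most two fence cuts placed around the new bead. The net change per update is therefore at most $+2$.

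For the time complexity, I would use the amortization scheme laid out in the surrounding discussion. Each invocation of \Fence{} itself runs in $O(1)$ time using the doubly linked list: we unlink and relink one node and insert at most two cuts. Once the accumulated cut surplus since the last full reallocation reaches $\Theta(kn)$, we rerun the offline algorithm at cost $O(m)$; spreading this $O(m)$ cost over the $\Omega(kn)$ \Fence{} calls in the block gives the stated amortized per-call cost of $O(m/(kn))$. The main obstacle I foresee is the owner-triple case analysis at the source position $j_1$: one must check that deleting $S[j_1]$ never silently introduces a new cut that would stack with the two new fence cuts at $j_2$ to push the per-update increase above two. Once that local verification is in hand, the rest is routine bookkeeping on top of the offline bounds and a one-line amortization.
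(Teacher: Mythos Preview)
Your proposal is correct and matches the paper's approach. The paper does not give a separate formal proof of this proposition; the justification is entirely contained in the two paragraphs preceding it, which assert that each \Fence{} call adds at most two cuts around the relocated bead and then carry out exactly the amortization you describe ($O(1)$ per call plus an $O(m)$ offline reallocation spread over $\Omega(kn)$ calls). Your owner-triple case analysis at the source position $j_1$ is a harmless elaboration of what the paper leaves implicit, and as you will find when you carry it out, removal of a bead can never increase the cut count, so the ``obstacle'' you flag does not materialize.
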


\section{Batch Relocation with Two Colors}
\label{sec:relocation_batch}
To make relocation more efficient, we can perform batch updates. Instead of moving a single bead, we will move $m'$ beads (of the same color)\footnote{Our techniques can be adapted in a straightforward manner to work with beads of both colors being relocated simultaneously, but for ease of exposition, we restrict ourselves to a single color here.} from indices $j_1^1, \dots, j_1^{m'}$ to indices $j_2^1, \dots, j_2^{m'}$. Let $\mathbf{A_1}$ and $\mathbf{A_2}$ be the corresponding sets of agents. If there are any agents in both sets, we can remove them from both. Let $k''$ be the number of agents remaining. Let $m''$ be the number of beads that need to change owners.

We construct a flow network $G'' = (V'', E'')$ from $G$ as follows. For each node in $V$, there is a corresponding node in $V''$. In addition, there is a source $s$ and sink $t$. For each node $u \in V'' \setminus \set{s, t}$, let $\Delta(u)$ denote the net change in the number of beads that the corresponding agent owns. For each node $u$ corresponding to an agent in $\mathbf{A_1}$, there is an edge $(s, u)$ with capacity $c_{su} \coloneqq -\Delta(u)$. For each node $u$ corresponding to an agent in $\mathbf{A_2}$, there is an edge $(u, t)$ with capacity $c_{ut} \coloneqq \Delta(u)$. For each edge in $E$, there is a corresponding edge $(u, v)$ in $E''$ with infinite capacity.  % functionally equivalent to $c_{uv} \coloneqq m''$

Define an \emph{active} node to be one with positive incoming and outgoing flow. We want to find a max flow (of value $m''$) such that the number of active nodes is minimized. We will henceforth refer to this problem as \MinNodeMaxFlow{}.

\begin{proposition}
\label{prop:minnodemaxflow}
    \MinNodeMaxFlow{} is NP-complete\footnote{See Appendix~\ref{sec:minnodenaxflow} for proof.}.
\end{proposition}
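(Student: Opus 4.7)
The plan is to establish NP-membership directly and prove NP-hardness by reduction from Set Cover. Membership is essentially routine: an assignment of nonnegative flow values to the edges of $E''$ serves as a polynomial-size certificate, and verifying flow conservation, comparing the flow value against the polynomial-time-computable maximum flow value, and counting nodes with positive incoming and outgoing flow are all straightforward.

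For hardness I would reduce from Set Cover. Given a universe $U = \{u_1, \dots, u_N\}$ and sets $S_1, \dots, S_M \subseteq U$, I would construct a \MinNodeMaxFlow{} instance with a supply node $a_i$ per set (taking $\Delta(a_i) = -|S_i|$), a demand node $b_j$ per element (taking $\Delta(b_j) = 1$), and an internal edge between $a_i$ and $b_j$ whenever $u_j \in S_i$. This induces source edges $(s, a_i)$ of capacity $|S_i|$, sink edges $(b_j, t)$ of unit capacity, and infinite-capacity internal edges, precisely matching the structure stipulated by the problem statement. Without loss of generality the family covers $U$, so the maximum flow value is exactly $N$.

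The heart of the argument is to show that the minimum number of active nodes in this instance equals $N + \mathrm{OPT}_{\mathrm{SC}}$. Every max flow saturates all $N$ sink edges, so each $b_j$ is unavoidably active. A supply node $a_i$ is active exactly when some unit of flow is routed through it, and by flow conservation the set of active $a_i$ must correspond to a family of $S_i$ whose union covers $U$. Conversely, any cover $\mathcal{C}$ of size $k'$ yields a max flow activating exactly $k'$ supply nodes: for each element $u_j$, route its unit of flow from $s$ through some $a_i$ with $S_i \in \mathcal{C}$ and $u_j \in S_i$. Hence a polynomial-time algorithm for \MinNodeMaxFlow{} would solve the decision version of Set Cover.

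The main obstacle I anticipate is a subtle mismatch in edge directionality: internal edges in $E''$ are inherited from the undirected graph $E$, whereas my reduction implicitly orients each as $a_i \to b_j$. I expect this to pose no real difficulty because each internal edge in the construction joins a node reachable only from $s$ to one that only reaches $t$, so any feasible $(s,t)$-flow automatically orients it $a_i \to b_j$, and the analysis is unaffected by whether we read the internal edges as undirected or as pairs of anti-parallel directed edges.
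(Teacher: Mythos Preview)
Your proposal is correct and follows essentially the same template as the paper's proof, but with a different source problem for the reduction. The paper reduces from \textsf{Vertex Cover}: for each edge $(u,v)$ of the input graph it introduces a gadget node $\overline{uv}$ with a unit-capacity arc from the source, infinite-capacity arcs $\overline{uv}\to u$ and $\overline{uv}\to v$, and infinite-capacity arcs from the original vertices to the sink; the forced-active nodes are the $\overline{uv}$ and the nodes to be minimized are the original vertices. Your \textsf{Set Cover} reduction is the natural generalization of this (Vertex Cover being Set Cover with each element in exactly two sets), with the roles of source-side and sink-side swapped. Two small things your version buys: it needs no separate argument about fractional flows, since an $a_i$ carrying any positive amount of flow already witnesses coverage of the elements it feeds, whereas the paper explicitly rounds each $\overline{uv}$'s split flow to one endpoint; and your capacity choices ($|S_i|$ on source arcs, $1$ on sink arcs) line up exactly with the $\Delta$-based capacities in the problem statement, whereas the paper's infinite-capacity sink arcs are a harmless but slightly looser fit. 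Conversely, the paper's construction stays closer to a graph-theoretic instance, which is aesthetically in keeping with the neighborhood-graph origin of \MinNodeMaxFlow{}. Your remark about edge directionality is well taken and correctly resolved: since each $a_i$ has no outgoing arc other than to $b_j$'s and $b_j$ has no outgoing arc other than to $t$, any feasible $(s,t)$-flow uses the internal edges only in the $a_i\to b_j$ direction.
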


We now describe the construction of a spanning tree of $G$ that will allow us to approximately solve \MinNodeMaxFlow{}.
Consider the process of generating the neighborhood graph $G$ while the offline necklace splitting algorithm is being run. When a new node is added, we initially assign it a \emph{level} of 1. If the corresponding agent's interval encloses any other agents' intervals, then we increment the levels of all the nodes corresponding to agents with enclosed intervals.

For example, the node corresponding to the first agent allocated beads is initially on level 1. If the second agent's interval encloses that of the first, the first agent's node is moved to level 2. If the third agent's interval encloses that of the second, then we have the first agent's node on level 3 and the second agent's node on level 2.

We continue the process of updating the levels until all agents have been assigned their intervals and the graph $G$ is finalized. Then, we construct a new graph by removing all edges between nodes on the same level, other than those on level 1. We will denote this \emph{neighborhood tree} by $T$.

\begin{figure}[bhpt]
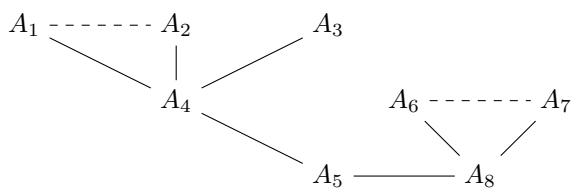

    \centering
    \tikz \graph [no placement] {
        {[y=2]
            a1/$A_1$[x=0] --[dashed] a2/$A_2$[x=2],
            a3/$A_3$[x=4]
        },
        {[y=1]
            a4/$A_4$[x=2] -- a1,
            a4 -- a2,
            a4 -- a3,
            a6/$A_6$[x=5] --[dashed] a7/$A_7$[x=7]
        },
        {[y=0]
            a5/$A_5$[x=4] -- a4,
            a8/$A_8$[x=6] -- a6,
            a8 -- a7,
            a5 -- a8
        }
    };
    \caption{An example of a neighborhood tree. The dashed lines indicate additional edges present in the corresponding neighborhood graph.}
    % \label{fig:neighborhood_tree}
\end{figure}

\begin{proposition}
    $T$ is a spanning tree of $G$.
\end{proposition}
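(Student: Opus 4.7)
The plan is to show $T$ is a spanning tree of $G$ by establishing three things: $T$ contains every vertex of $G$ (immediate since $T$ only deletes edges), $T$ is connected, and $T$ has exactly $k-1$ edges. First I would introduce, for each agent $X$, the \emph{extended arc} $R_X \subseteq S$---the cyclic subarc of $S$ spanned by the $L$-block assigned to $X$ during its turn of the offline algorithm. An induction on the algorithm would show that $\{R_X\}$ is laminar (any two arcs are disjoint or nested), and that the level assigned to $X$ equals $1 + |\{Y : R_X \subsetneq R_Y\}|$---the depth of $X$ in the corresponding laminar forest, with roots (depth one) being the agents whose extended arcs are maximal under containment.

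For connectivity I would proceed by downward induction on level. The level-one arcs partition $S$ into consecutive subarcs, so consecutive level-one agents share boundary beads, giving same-level-one edges retained in $T$; the level-one subgraph is therefore connected. For a non-root $A$ at level $\ell \geq 2$ with laminar parent $p(A)$, the arc $R_A$ lies inside a single gap $g$ of $p(A)$'s $L$-block, and the outermost children of $p(A)$ in $g$ (including $A$) partition $g$; the leftmost and rightmost of these share boundary beads with $p(A)$, yielding different-level edges preserved in $T$. A careful threading through the sibling substructure inside each gap then shows every agent in $A$'s laminar subtree is connected back to $p(A)$ via different-level edges in $T$.

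For the edge count, let $r$ denote the number of level-one agents. The level-one backbone contributes $r - 1$ edges, and each of the remaining $k - r$ non-root agents contributes exactly one parent-ward edge to $T$, giving $|E(T)| = (r-1) + (k-r) = k-1$. Combined with connectivity, this completes the argument that $T$ is a spanning tree.

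The hard part will be the connectivity step for an agent $A$ whose extended arc is strictly interior to its parent's gap: such an $A$ is not directly $G$-adjacent to $p(A)$, and its only same-level sibling edges are removed in $T$. I would address this with a nested induction inside each gap, using $A$'s subtree of higher-level descendants (kept in $T$ via retained different-level edges) to chain a path through neighboring sibling subtrees and eventually back to $p(A)$.
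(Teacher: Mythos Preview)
Your edge-count step contradicts your own ``hard part'' observation. You assert that each of the $k-r$ non-root agents contributes exactly one parent-ward edge to $T$, yet you then note that an interior child $A$ need not be $G$-adjacent to $p(A)$. Because the boundary beads of $R_{p(A)}$ belong to $p(A)$, every bead of $A$ lies strictly inside $R_{p(A)}$, so a non-root $A$ at level $\ell>1$ can be $G$-adjacent to at most one vertex on level $\ell-1$, namely $p(A)$ itself. An interior $A$ therefore contributes \emph{zero} parent-ward edges, and the count $(r-1)+(k-r)=k-1$ collapses.

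The proposed repair for connectivity in the interior case also fails. You want to route from $A$ through its higher-level descendants into a neighboring sibling's subtree and onward to $p(A)$. But sibling arcs $R_A$ and $R_C$ are disjoint and the boundary beads of each belong to the respective sibling, so the only $G$-edge between the whole laminar subtree rooted at $A$ and the one rooted at $C$ is the same-level edge $A$--$C$---precisely the edge $T$ deletes. Concretely: let $A_1,A_2,A_3$ be assigned consecutive length-$m/k$ blocks, none enclosing another, and let $A_4$ next receive a block that straddles the gap and encloses all three; then $A_2$ is $G$-adjacent only to $A_1$ and $A_3$, all at the same level $>1$, and has no descendants through which to thread a path. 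The paper's proof sidesteps edge-counting entirely: it argues acyclicity directly (a node at the extreme level of a putative cycle would need two distinct neighbors on the adjacent level, contradicting uniqueness of the enclosing agent) and for connectivity simply asserts that both endpoints of any removed same-level edge are joined to the enclosing agent on the previous level.
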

\begin{proof}
    First, we prove that $T$ is connected. $G$ is trivially connected since the necklace it represents is a contiguous collection of beads. We need to show that the edges removed from $G$ to construct $T$ are unnecessary for connectivity. Consider any edge removed. By construction, neither endpoint is on level 1. Thus, both endpoints are joined to a node on the previous level corresponding to an enclosing agent, and consequently, they remain connected.

    Next, we prove that $T$ is acyclic. Assume for contradiction that there exists a cycle in $T$. Let $\ell$ be the lowest level represented in the cycle. Consider a node $u$ in the cycle on level $\ell$. Both of its neighbors in the cycle must be on level $\ell - 1$. However, by construction, each node is joined to at most one node on the level below it. We have a contradiction, so $T$ must be acyclic.
    Therefore, $T$ is a spanning tree of $G$.
\end{proof}

% note: the proof of acyclicity relies on the necklace not being interpreted as a circle. if the problem formulation is changed to have the necklace be circular, $T$ must be slightly modified so that there is not a cycle among the nodes on level 1.

We now describe how we can efficiently produce an approximate solution to \MinNodeMaxFlow{} using the properties of $T$.
We construct a flow network $T'$ from $T$ analogous to $G''$. The only difference from $G''$ is the lack of edges between nodes on the same level (other than level 1).

\begin{lemma}
    The number of active nodes in a solution to \MinNodeMaxFlow{} on $T'$ is at most twice the number of active nodes in a solution on $G''$.
\end{lemma}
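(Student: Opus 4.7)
The plan is to establish the bound constructively. Given any feasible solution to \MinNodeMaxFlow{} on $G''$ with active set $X$, I will construct a feasible solution on $T'$ with active set $X'$ satisfying $|X'| \leq 2|X|$. Applying this to an optimum on $G''$ then gives the lemma.

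I would begin by decomposing the $G''$-flow of value $m''$ into unit $s$--$t$ paths and then rewriting every path to use only edges of $T'$. Since $T$ is a spanning tree of $G$, the only edges of $G''$ missing from $T'$ are the same-level edges on levels $\ell \geq 2$. For each such removed edge $(u, v)$ appearing in a flow path, I would substitute the two-edge detour $(u, p), (p, v)$ through a common $T$-neighbor $p$ of $u$ and $v$ at level $\ell - 1$. The structural claim that such a $p$ always exists---namely, that two same-level, $G$-adjacent agents share a common enclosing agent at the level immediately below---is the main obstacle. I would verify it by tracing how levels are assigned by the offline algorithm: two same-level agents with adjacent beads in the necklace must be enclosed by a common innermost agent at the next lower level, and that agent is necessarily adjacent to both of them in $T$.

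The construction is completed by a charging argument. Let $Y$ denote the set of detour nodes used. By construction, each $p \in Y \setminus X$ has at least two distinct ``children'' in $X$---the endpoints of the edge it serves---where ``child of $p$'' means an immediate $T$-neighbor one level above $p$. The proof that $T$ is acyclic shows that each node has at most one parent at the level immediately below, so the children-in-$X$ sets of distinct detour nodes are pairwise disjoint. Therefore $2|Y \setminus X| \leq |X|$, and consequently
\[|X'| \leq |X| + |Y \setminus X| \leq \tfrac{3}{2}|X| \leq 2|X|,\]
as claimed.
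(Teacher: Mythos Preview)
Your proposal is correct and uses the same rerouting idea as the paper: replace each same-level edge (above level~1) carrying flow by a two-step detour through the common neighbor one level below. The paper does this edge-by-edge without first decomposing into paths; your unit-path decomposition is an unnecessary intermediate step (and tacitly assumes integrality---easily repaired by using weighted paths or just rerouting per edge), but it causes no real harm.

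Where you differ is in the accounting. The paper bounds the number of new detour nodes by the number of same-level edges carrying flow, and uses the fact that each node has at most two same-level neighbors in $G$ to cap that at $k'$, giving $2k'$ active nodes in total. Your charging argument is sharper: each new detour node $p$ is charged to at least two of its $T$-children in $X$ (the endpoints of an edge it serves), and since every node has a unique $T$-parent these child sets are disjoint, yielding $|Y\setminus X|\le |X|/2$ and hence a $\tfrac{3}{2}$ factor rather than $2$. Both arguments rely on the same structural claim---that two $G$-adjacent agents on the same level $\ell\ge 2$ share a common $T$-neighbor at level $\ell-1$---which you rightly flag as the main thing to verify; the paper asserts it as ``the unique node on the previous level joined to both $u$ and $v$'' without further justification in this lemma.
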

\begin{proof}
    Consider an optimal solution to \MinNodeMaxFlow{} on $G''$. Assume, without loss of generality, that there is no pair of nodes $u$ and $v$ such that both $(u,v)$ and $(v,u)$ have positive flow. Let $k'$ be the number of active nodes. For each edge $(u,v)$ with positive flow where $u$ and $v$ are on the same level (other than level 1), we can remove the flow and instead add equal flow to $(u,w)$ and $(w,v)$, where $w$ is the unique node on the previous level joined to both $u$ and $v$. If we replicate the resulting flow on $T'$, we have a feasible solution to \MinNodeMaxFlow{} on $T'$.

    Since each node is joined to at most two nodes on the same level, there are at most $k'$ pairs of adjacent active nodes on the same level. In the worst case, we add $k'$ active nodes to the solution, resulting in $2k'$ active nodes.
    The optimal solution to \MinNodeMaxFlow{} on $T'$ must have at most as many active nodes as this feasible solution, so it also has at most $2k'$ active nodes. Therefore, the number of nodes is at most twice the number of active nodes in an optimal solution on $G''$.
\end{proof}

\begin{lemma}
    \MinNodeMaxFlow{} can be solved in $O(k'' \log k'' + k')$ time on $T'$.
\end{lemma}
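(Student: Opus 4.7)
The plan is to exploit the fact that $T'$ is a tree, which forces the edge flows in any max flow to be uniquely determined and thus leaves no freedom in the choice of active nodes. I would start by observing that for any internal edge $e$ of $T'$, removing $e$ splits the tree into two components, and since every edge incident to $s$ or $t$ must be saturated in a max flow of value $m''$ while every internal edge has infinite capacity, the flow across $e$ equals the absolute difference between the total supply and total demand restricted to one of the two components. Consequently, the active-node set is a deterministic function of $T$ and the values $\Delta(\cdot)$, so any algorithm that computes this unique flow simultaneously solves \MinNodeMaxFlow{}.

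Next I would build the \emph{virtual tree} (also called the auxiliary or Steiner tree) of the $k''$ marked agents in $T$: sort them by a DFS preorder of $T$ in $O(k'' \log k'')$ time, then take the LCA of each consecutive pair in the sorted order and assemble these into a tree $\widetilde{T}$ of at most $2k''-1$ nodes whose edges correspond to maximal paths of $T$ internal to the marked subtree. With $O(1)$-time LCA queries (via standard preprocessing done once when $T$ is constructed), this step costs $O(k'')$ after the sort. A linear tree DP on $\widetilde{T}$ then computes the net supply in each virtual subtree, immediately yielding the unique flow value on every virtual edge. To enumerate the active nodes I would walk, via parent pointers, the $T$-path realized by each virtual edge that carries nonzero flow; a node is active iff it is traversed by one of these walks, and the total walk length is $O(k')$, so the enumeration costs $O(k')$ and the overall runtime is $O(k'' \log k'' + k')$.

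The main obstacle is supporting the LCA queries within the stated bound, since charging a generic $O(|V(T)|)$ preprocessing to this lemma alone would be too expensive. I would either amortize that preprocessing to the construction and maintenance of $T$, or exploit its special level structure, in which every non-level-$1$ node has a unique parent on the previous level and the level-$1$ nodes lie on a spine, so that an LCA query reduces to walking both marked endpoints up to level $1$ and comparing their positions on the spine. With this in place, the uniqueness argument and the final active-node enumeration are essentially bookkeeping.
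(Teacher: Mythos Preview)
Your key structural insight---that on a tree the max flow is uniquely determined edge by edge, so \MinNodeMaxFlow{} reduces to simply computing that flow---is correct and is implicitly what the paper relies on as well. The two proofs diverge in how they compute this unique flow.

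The paper does not build a virtual tree or use LCA at all. Instead it exploits the level structure of $T$ directly: every node not on level~$1$ has a single edge to the level below, so one can sort the $k''$ sources/sinks by level in $O(k''\log k'')$ time, then repeatedly take the highest remaining node with nonzero excess and push that excess along its unique downward edge. Each push touches an active node and creates at most one new candidate on the next level, so this phase costs $O(k')$. Once all remaining imbalance sits on the level-$1$ spine, a single left-to-right sweep along the spine resolves it, again in $O(k')$. This avoids any $O(k)$-sized auxiliary structure.

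Your virtual-tree route is a legitimate and more general alternative, but the obstacle you flag is genuine: the bound in the lemma has no $O(k)$ term, so the DFS preorder array and LCA table cannot be charged to this lemma. Amortizing them to the one-time construction of $T$ is defensible in the static phase, but $T$ changes after every batch update, and maintaining $O(1)$ LCA under such changes is nontrivial and not something the paper sets up. Your fallback of ``walking both endpoints up to level~$1$'' does not yield $O(1)$ LCA, since the level of a node can be $\Theta(k)$; in fact, doing that walk is essentially the paper's phase-one sweep, just organized per query rather than globally. So your proposal is correct in spirit, but to match the stated bound without extra assumptions you would end up re-deriving the paper's level-by-level push, which sidesteps LCA entirely.
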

\begin{proof}
    The edges from the source $s$ and to the sink $t$ must all be saturated to produce a max flow, so we can ignore both nodes. The remaining subgraph has structure identical to $T$, and is thus a tree.
    We sort the sources/sinks by their level in descending order and initialize a linked list of pointers to them in that order.
    Consider the set of nodes with excess (potentially negative) flow on the highest level (other than level 1). At first, these are all sources/sinks. Each of these nodes has a unique edge leading to the level below. Any valid flow must have flow on these edges. We place the appropriate amount of flow on each of these edges to satisfy the demands of the sources/sinks. Then we consider the new set of nodes with excess flow on the highest level (other than level 1) and repeat the process, updating the linked list accordingly. We terminate once all nodes not on level 1 have no excess flow. In each step, the highest level with excess flow decreases, so this process must terminate.

    If any nodes with excess flow remain, they are on level 1. Consider the leftmost such node. We put flow on each edge to the right until the current node has no excess flow. Then we find the new leftmost node with excess flow and repeat the process. In each step, the number of nodes with excess flow decreases, so eventually no nodes will have excess flow, at which point we terminate.

    Sorting the initial list takes $O(k'' \log k'')$ time. In the first phase, we process only the nodes in the solution, and each in constant time, giving a running time of $O(k')$ for the first phase. In the second phase, assuming the relative position of each node on level 1 is stored when $T$ is constructed, we can find the leftmost node in $O(k')$ time, and after that, we only process nodes in the solution, each in constant time. This gives us a running time of $O(k')$ for the second phase. Overall, we have a running time of $O(k'' \log k'' + k')$ for solving \MinNodeMaxFlow{} on $T'$.
\end{proof}

\begin{example}
Consider the following neighborhood tree.
\begin{center}
    \tikz \graph [no placement] {
        {[y=2]
            a1/$A_1$[x=0],
            a2/$A_2$[x=2],
            a3/$A_3$[x=4]
        },
        {[y=1]
            a4/$A_4$[x=2] -- a1,
            a4 -- a2,
            a4 -- a3,
            a6/$A_6$[x=5],
            a7/$A_7$[x=7]
        },
        {[y=0]
            a5/$A_5$[x=4] -- a4,
            a8/$A_8$[x=6] -- a6,
            a8 -- a7,
            a5 -- a8
        }
    };
\end{center}
Suppose $A_1$ and $A_5$ are sources and $A_3$ and $A_6$ are sinks, all with demands of 1. At first, $A_1$ and $A_3$ are the highest nodes with excess flow. We put flow on the edges $(A_1, A_4)$ and $(A_4, A_3)$.
\begin{center}
    \tikz \graph [no placement] {
        {[y=2]
            a1/$A_1$[x=0],
            a2/$A_2$[x=2],
            a3/$A_3$[x=4]
        },
        {[y=1]
            a4/$A_4$[x=2] <-[red, very thick] a1,
            a4 -- a2,
            a4 ->[red, very thick] a3,
            a6/$A_6$[x=5],
            a7/$A_7$[x=7]
        },
        {[y=0]
            a5/$A_5$[x=4] -- a4,
            a8/$A_8$[x=6] -- a6,
            a8 -- a7,
            a5 -- a8
        }
    };
\end{center}
Now the highest node with excess flow is $A_6$, so we put flow on $(A_8, A_6)$.
\begin{center}
    \tikz \graph [no placement] {
        {[y=2]
            a1/$A_1$[x=0],
            a2/$A_2$[x=2],
            a3/$A_3$[x=4]
        },
        {[y=1]
            a4/$A_4$[x=2] <-[red, very thick] a1,
            a4 -- a2,
            a4 ->[red, very thick] a3,
            a6/$A_6$[x=5],
            a7/$A_7$[x=7]
        },
        {[y=0]
            a5/$A_5$[x=4] -- a4,
            a8/$A_8$[x=6] ->[red, very thick] a6,
            a8 -- a7,
            a5 -- a8
        }
    };
\end{center}
Continuing, we put flow on $(A_5, A_8)$.
\begin{center}
    \tikz \graph [no placement] {
        {[y=2]
            a1/$A_1$[x=0],
            a2/$A_2$[x=2],
            a3/$A_3$[x=4]
        },
        {[y=1]
            a4/$A_4$[x=2] <-[red, very thick] a1,
            a4 -- a2,
            a4 ->[red, very thick] a3,
            a6/$A_6$[x=5],
            a7/$A_7$[x=7]
        },
        {[y=0]
            a5/$A_5$[x=4] -- a4,
            a8/$A_8$[x=6] ->[red, very thick] a6,
            a8 -- a7,
            a5 ->[red, very thick] a8
        }
    };
\end{center}
All nodes are now satisfied, so the algorithm terminates.\lipicsEnd
\end{example}

After \MinNodeMaxFlow{} has been solved on $T'$, we can return to $G$ to perform a heuristic optimization.
For each active node $u$ that is neither a source/sink nor sending flow to the level below, consider the subgraph consisting of its active neighbors on the level above. If that subgraph is connected, then we can remove $u$ from the solution. Finding all such subgraphs takes $O(k)$ time since we consider each edge a constant number of times. Checking if the subgraphs are connected takes $O(k)$ time in total. Overall, this heuristic pruning process adds $O(k)$ additional time.

Finally, we rerun the offline necklace splitting algorithm on the substring belonging to the agents corresponding to the active nodes. We will henceforth refer to this procedure as \BatchPath{}.
We have an overall running time of $O\Bigl(\frac{1}{m'} \Bigl(k'' \log k'' + k' + \frac{k'm}{k}\Bigr)\Bigr) = O\Bigl(\frac{k'' \log k''}{m'} + \frac{k'm}{km'}\Bigr)$ per bead for \BatchPath{} (without pruning). Since $k'' \leq m'$, we also have the looser but more intuitive running time of $O\Bigl(\log k + \frac{k'm}{km'}\Bigr)$. Thus, there is only an $O(\log k)$ overhead when using \BatchPath{} compared to \Path{}, without accounting for the $m'\times$ speedup from batching updates.

% In a random model, $k'$ will probably be relatively close to $k$, so it is not worth including an analysis of it.

\begin{theorem}
    \BatchPath{} produces a set of cuts of size at most $2(k-1)$ and has time complexity $O\Bigl(\log k + \frac{k'm}{km'}\Bigr)$.
\end{theorem}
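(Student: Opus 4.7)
The plan is to mirror the structure of the proof of Theorem~\ref{thm:path_cuts}, extending it from a single path of agents in $G$ to the subforest of active agents produced by the approximate \MinNodeMaxFlow{} solver on $T'$ together with the subsequent heuristic pruning.

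First, I would maintain the same invariant used earlier: before each batch update, the current allocation could have been produced by some execution of the offline algorithm, up to the order in which intervals are selected. Let $\mathcal{A}^*$ denote the set of agents corresponding to the active nodes after pruning. Since the flow on $T'$ is routed along tree edges from sources to sinks, the nodes of $\mathcal{A}^*$ span a subforest of $T'$, and in particular each connected component of $\mathcal{A}^*$ is a connected subgraph of $G$ as well.

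The heart of the argument is to show that, within each connected component of $\mathcal{A}^*$, the offline algorithm can be conceptually reordered so that those agents take consecutive turns. For any two neighbors in the subtree the corresponding agents share a common boundary on the necklace, so by the same enclosure argument as in the proof of Theorem~\ref{thm:path_cuts} no outside agent can enclose one's interval without also enclosing the other's. Applying this transitively along each component yields the desired reordering, independently for every component. With this ordering in place, each agent in $\mathcal{A}^*$ owns at most two cuts from the offline bound of $2(k-1)$, with the usual exceptions for agents allocated last. Removing every cut incident to $\mathcal{A}^*$ and then rerunning the offline algorithm on the combined beads of $\mathcal{A}^*$ therefore reinstates a total of at most $2(k-1)$ cuts. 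Pruning does not affect this count, since it only shrinks $\mathcal{A}^*$ while preserving feasibility of the flow. Finally, I would check that the rerun leaves the allocation in a form that an offline execution could have produced, closing the induction.

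For the running time, I would simply sum the costs already established in the preceding lemmas: $O(k'' \log k'' + k')$ to solve \MinNodeMaxFlow{} on $T'$ and $O(k' m / k)$ to rerun the offline algorithm on the combined substring of the active agents. Dividing by $m'$ and using $k'' \leq m'$ to bound $k'' \log k'' / m' = O(\log k)$ yields the claimed $O\bigl(\log k + k' m / (k m')\bigr)$ per bead. The main obstacle I expect is the structural step: verifying that after both the approximate flow computation on $T'$ and the heuristic pruning, the resulting $\mathcal{A}^*$ still admits a simultaneous offline reordering as described above. In particular, one has to confirm that the enclosure argument survives the fact that $\mathcal{A}^*$ is a forest rather than a single path, and that the pruning step, which deletes intermediate active nodes whose upper-level neighbors are already connected, never breaks the possibility of a consistent offline reordering on what remains.
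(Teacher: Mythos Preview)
The paper states this theorem without an explicit proof: the running time is derived in the paragraph immediately preceding the theorem, and the cut bound is left implicit as a direct extension of Theorem~\ref{thm:path_cuts}. Your proposal is exactly that extension and matches what the paper evidently intends; the obstacles you flag (forest versus path, effect of pruning) are real but minor, since the enclosure argument in Theorem~\ref{thm:path_cuts} only uses adjacency along a single edge of $G$ and therefore propagates over any connected subgraph, and pruning by construction removes a node only when its active neighbors remain connected in $G$.
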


\section{Relocation with Multiple Colors and High Cut Density}
\label{sec:relocation_multiple_colors}
Next, we consider the case of relocation for general $n \geq 2$ with the additional restriction that $m = nk$. Note that using the corresponding offline algorithm of Alon and Graur~\cite[Proposition~1]{Alon2021}, the initial solution is guaranteed to have exactly $n(k-1)$ cuts (some may be redundant). This initial solution is produced in $O(m)$ time.
To enable efficient implementations for our dynamic algorithms, we maintain a two-dimensional array indexed by agent and color that points to the corresponding bead in $S$.

\subsection{Adjacent Indices}
Suppose we need to swap a red bead (on the left) and blue bead (on the right) between $A_1$ and $A_2$, respectively. There are three possible cases based on whether or not there is a cut to the left of the red bead and/or the right of the blue bead (Table~\ref{tab:denseswap}).

\begin{table}[bhpt]
    \centering
    \caption{The possible cases for the \DenseSwap{} algorithm.}
    \label{tab:denseswap}
    \begin{tabular}{cl}
        \toprule
        Case & Necklace \\
        \midrule
        1 & \necklace{cRcBc} \textrightarrow{} \necklace{cBcRc} \\[1ex]
        2 & \necklace{cRcBG} \textrightarrow{} \necklace{cBcRG} \\[1ex]
        3a & \necklace{BRcB} \textrightarrow{} \necklace{BBcR} \textrightarrow{} \necklace{BcBR} \\[1ex]
        3b & \necklace{GRcB} \textrightarrow{} \necklace{GBcR} \textrightarrow{} \necklace{GcBR} \\
        \bottomrule
    \end{tabular}
\end{table}

In the first case, there is a cut to the left of the red bead and to the right of the blue bead (or it is at the end of the necklace). This is the simplest case -- we swap the beads and let the agents retain ownership of their original beads. No cuts need to be adjusted.

In the second case, there is a cut on the left, but not the right. We swap the beads and do not adjust any cuts. $A_2$ takes possession of the left interval, and $A_1$ takes possession of the right interval. As part of the right interval, $A_1$ gains at least one unneeded extra bead (the green one). Each extra bead lacks a cut to its left, so they must be the first of their colors. Thus, the original beads of $A_1$ are somewhere to the right in the necklace. Furthermore, since each of the original beads is not the first of its color, they each start their own intervals. $A_2$ takes ownership of each of these intervals and repeats the process with each of them if they have multiple beads, continuing to attempt to re-satisfy the fairness constraint. Since the necklace has exactly $n-1$ pairs of adjacent beads without a cut, this results in at most $n$ ownership exchanges throughout the procedure.

In the third case, there is no cut on the left. We swap the beads and move the cut left by one bead (to ensure that the set of cuts maintains the structure of one generated by the offline algorithm). $A_1$ and $A_2$ retain the left and right intervals, respectively. Similarly to the second case, to maintain the fairness constraint, at most $n-1$ intervals will have to be exchanged between the two agents.

We will henceforth refer to this procedure as \DenseSwap{}.
The two-dimensional array allows us to perform each ownership exchange in constant time, giving a running time of $O(n)$ for \DenseSwap{}.

\begin{theorem}
    \DenseSwap{} produces a set of cuts of size exactly $n(k-1)$ and has time complexity $O(n)$.
\end{theorem}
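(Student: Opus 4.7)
The plan is to verify the three claims---fairness, exactly $n(k-1)$ cuts, and $O(n)$ running time---by case analysis on the local configuration described in the algorithm. I would first observe that the hypothesis $m = nk$ together with $k \mid m_i$ forces $m_i = k$ for every color $i$, so fairness requires every agent to own exactly one bead of each color and hence exactly $n$ beads in total. This observation drives both the exchange count and the running time.

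For the cut-count invariant, I would argue case by case. In Case~1 no cut is touched, since the beads already sit in singleton intervals and ownership is retained. In Case~2 the initial swap again touches no cuts, and the subsequent ownership exchanges merely reassign existing intervals between $A_1$ and $A_2$ without creating or deleting any cut. In Case~3 a single cut is shifted one position to the left, which is a repositioning rather than a change in count, and subsequent exchanges once more only reassign intervals. Since the initial offline allocation has exactly $n(k-1)$ cuts by the result of Alon and Graur, the count is preserved after every invocation of \DenseSwap{}.

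For fairness, the crux is to show that the chain of ownership exchanges in Cases~2 and~3 terminates with a valid allocation. After the initial physical swap, $A_1$ holds one surplus bead of some color (the extra bead inherited with its new right interval) and is missing one bead of another color, with $A_2$ mirroring the imbalance. Each exchange returns a surplus interval to $A_2$ and transfers back to $A_1$ an interval containing one of its original displaced beads; by the argument in the algorithm description, each such displaced bead is not first-of-color and therefore begins its own interval, which is exactly what makes the exchange possible without touching any cut. Because the necklace contains only $n-1$ color-adjacent pairs without an intervening cut, the chain terminates after at most $n$ transfers. A short induction on the number of still-mismatched beads then shows that fairness is fully restored.

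For the running time, the two-dimensional array indexed by agent and color allows each ownership exchange to locate its next interval and update ownership in $O(1)$ time, so the total work is bounded by the number of exchanges, giving $O(n)$. The main obstacle in the write-up is the termination-and-correctness argument for the exchange chain in Cases~2 and~3: I would need to state and carefully maintain the invariant that each misplaced bead is first-of-color and begins its own interval, which is precisely what guarantees every transfer strictly decreases the mismatch and that the process can neither loop nor stall before fairness is restored.
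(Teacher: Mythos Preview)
Your plan matches the paper's own justification, which is given inline in the algorithm description rather than as a separate proof: the cut count is preserved because Cases~1--2 touch no cuts and Case~3 only repositions one; fairness is restored by the bounded chain of ownership exchanges; and the $O(n)$ bound follows from the at-most-$n$ exchanges each done in $O(1)$ via the agent--color array. One slip to fix in your write-up: in your final paragraph you state the invariant as ``each misplaced bead is first-of-color and begins its own interval,'' but in the paper's logic (and in your own earlier paragraph) these are opposite conditions---a bead that is \emph{not} first-of-color has a cut to its left and hence begins its own interval, while a first-of-color bead lacks that cut; make sure the invariant you actually maintain is the latter.
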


\subsection{Nonadjacent Indices}
We can generalize the ideas from the previous section to design an algorithm for arbitrary relocation. In addition to the two-dimensional array, we maintain for each color a red--black tree tracking the indices of the beads of that color in $S$. Suppose we need to relocate a red bead. There are four possible cases.

%\begin{table}[bhpt]
%    \centering
%    \caption{The possible cases for the \DenseJump{} algorithm.}
%    \label{tab:densejump}
%    \begin{tabular}{cc}
%        \toprule
%        Case & First of its color? \\
%        \midrule
%        1 & \clap{\cmark}\quad \textrightarrow{}\quad \clap{\cmark} \\
%        2 & \clap{\xmark}\quad \textrightarrow{}\quad \clap{\xmark} \\
%        3 & \clap{\cmark}\quad \textrightarrow{}\quad \clap{\xmark} \\
%        4 & \clap{\xmark}\quad \textrightarrow{}\quad \clap{\cmark} \\
%        \bottomrule
%    \end{tabular}
%\end{table}

In the first case, the bead to be relocated begins as the first of its color and remains so. We move the bead to the desired position, and if the interval is owned by a different agent, we re-satisfy the fairness constraint by exchanging at most $n$ intervals between the two agents, as in the previous section.

In the second case, the bead to be relocated begins not as the first of its color and remains so. We move the bead and its corresponding cut to an arbitrary boundary to make it a singleton interval, and if the two affected intervals are not owned by the same agent, we re-satisfy the fairness constraint by exchanging at most $n-1$ intervals between the two agents. Then, we move the bead and its corresponding cut to the desired position, and if the destination interval does not belong to the owner of the bead, we re-satisfy the fairness constraint by exchanging at most $n$ intervals between the two agents.

In the third case, the bead begins as the first of its color but does not remain so. We relocate the bead directly to the left of the second red bead via Case~1, and then we relocate the second red bead to the desired position via Case~2. At most $3n-1$ intervals are exchanged throughout this case.

In the fourth case, the bead begins not as the first of its color but becomes the first. We relocate the bead and its corresponding cut directly to the right of the first red bead via Case~2, and then we relocate the first red bead to the desired position via Case~1. At most $3n-1$ intervals are exchanged throughout this case.

We will henceforth refer to this procedure as \DenseJump{}.
Updating the red--black tree in Case~2 takes $O(k)$ time if the rank of the bead changes. Otherwise, all updates to the tree take constant time. This gives us a total running time of $O(k + n)$ for \DenseJump{}.

\begin{theorem}
    \DenseJump{} produces a set of cuts of size exactly $n(k-1)$ and has time complexity $O(k + n)$.
\end{theorem}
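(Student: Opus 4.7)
The plan is to verify two claims separately: (i) each of the four cases preserves the invariant that the set of cuts has size exactly $n(k-1)$, and (ii) each execution of \DenseJump{} runs in $O(k+n)$ time using the stated data structures. For (i), I would start from the fact that the initial allocation has exactly $n(k-1)$ cuts by Proposition~1 of Alon and Graur~\cite{Alon2021}, so it suffices to show that each case maintains this count. The crucial observation is that interval ownership exchanges between two agents never alter the underlying cuts; only the explicit bead and cut movements do. In Case~1 the relocated bead is the first of its color at both source and destination, so no cut is created or destroyed. In Case~2 the bead carries its adjacent cut with it, first to the arbitrary intermediate boundary and then to the final destination, so again no cut is created or destroyed. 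Cases~3 and~4 are compositions of Case~1 and Case~2 and hence inherit the invariant automatically.

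For (ii), I would bound the number of interval exchanges per case by the same counting argument used for \DenseSwap{}: since the necklace contains only $n-1$ pairs of adjacent same-owner beads without an intermediate cut, each chain of ownership reassignments terminates after at most $n$ steps. Thus Case~1 costs at most $n$ exchanges, Case~2 at most $(n-1)+n = 2n-1$ (the singleton reduction followed by the final placement), and Cases~3 and~4 at most $3n-1$. Each exchange is $O(1)$ using the two-dimensional $(\textrm{agent},\textrm{color})$ array together with the doubly linked list representation of $S$, giving $O(n)$ total for interval rearrangement.

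The last contribution is the per-color red--black tree bookkeeping. Standard insertions and deletions cost $O(\log k)$, but in Case~2 the relocated bead may cross many other beads of its color, altering its rank among them and forcing an $O(k)$ update of the stored order information; Cases~3 and~4 inherit the same cost through their Case~2 sub-step. Summing the $O(k)$ tree maintenance with the $O(n)$ exchange cost yields the claimed $O(k+n)$ bound.

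The main obstacle I foresee is the careful case analysis for Cases~3 and~4: one must verify that after the first sub-step the configuration is consistent enough for the second sub-step to be applied without breaking the cut-count invariant or the structural property that the current allocation could have been produced by an execution of the offline algorithm (the analogue of the invariant maintained in Theorems~\ref{thm:swap_cuts_3} and~\ref{thm:path_cuts}). Making this explicit, and checking that the intermediate ``singleton'' and ``first-of-its-color-reshuffle'' states indeed satisfy the required structural conditions, is where the argument is most delicate.
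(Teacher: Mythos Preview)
Your proposal is correct and mirrors the paper's own argument: the paper establishes the cut-count invariant and the $O(k+n)$ bound by exactly the same four-case analysis, the same per-case exchange counts ($n$, $2n-1$, $3n-1$, $3n-1$), constant-time exchanges via the two-dimensional array, and the $O(k)$ red--black tree update in Case~2 when the bead's rank changes. Your added remark about verifying that the intermediate configurations in Cases~3 and~4 retain the offline-algorithm structure is a reasonable point of care, but the paper treats this as implicit in the reduction to Cases~1 and~2.
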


\section{Insertion and Deletion}
\label{sec:insertion_and_deletion}
In this section, we study the complementary dynamic updates of insertion and deletion with two colors.
For insertion, we simply give $\alpha$ beads to each agent and then relocate the $\alpha k$ beads to the desired indices using \BatchPath{}. For deletion, we delete the desired beads and then relocate beads from agents with surpluses to those with deficits, again using \BatchPath{}. Both insertion and deletion take $O\Bigl(\frac{k'' \log k''}{\alpha k} + \frac{k'm}{\alpha k^2}\Bigr) = O\Bigl(\frac{\log k}{\alpha} + \frac{k'm}{\alpha k^2}\Bigr)$ time per bead using this approach.

\begin{theorem}
    There is an algorithm for insertion and deletion that produces a set of cuts of size at most $2(k-1)$ and has time complexity $O\Bigl(\frac{\log k}{\alpha} + \frac{k'm}{\alpha k^2}\Bigr)$.
\end{theorem}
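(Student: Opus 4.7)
The plan is to verify both claims of the theorem by reducing each update to a single invocation of \BatchPath{} preceded by a cheap bookkeeping step, and then applying the guarantees from the previous section. I would handle insertion and deletion in parallel.

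For insertion, the preceding paragraph distributes $\alpha$ new beads of the target color to each agent before relocating them to their intended indices. I would formalize this by physically inserting each of the $\alpha k$ beads in the interior of some interval currently owned by one of the $k$ agents, so that every agent immediately holds the correct count of the new color. Since the insertions happen inside existing intervals rather than across their boundaries, the cut set is unchanged by this step. The $\alpha k$ misplaced beads are then corrected by a single call to \BatchPath{} with $m' = \alpha k$. For deletion, I would symmetrically first remove the specified $\alpha k$ beads (which can only shrink the cut set, as no new boundary is introduced), identify the agents with surpluses and deficits in the color being deleted, and then invoke \BatchPath{} with $m' = \alpha k$ to shuttle beads along the neighborhood tree from surplus agents to deficit agents.

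The cut-count bound follows immediately: the setup step never creates new cuts, and \BatchPath{} is already guaranteed to produce a set of cuts of size at most $2(k-1)$ by the previous section. For the time complexity, I would substitute $m' = \alpha k$ into the per-bead running time $O\bigl(\frac{k'' \log k''}{m'} + \frac{k' m}{k m'}\bigr)$ from \BatchPath{}. Since at most every agent can be a source or sink in either update, $k'' \leq k$, and therefore the first term is $O\bigl(\frac{k \log k}{\alpha k}\bigr) = O\bigl(\frac{\log k}{\alpha}\bigr)$, while the second term is $O\bigl(\frac{k' m}{\alpha k^2}\bigr)$, matching the claimed bound.

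The main thing to verify carefully — and the only real obstacle — is that the post-setup instance is actually a valid input to \BatchPath{}: the source/sink demands on the flow network $G''$ must sum to the same total and respect the capacities induced by $\Delta(u)$, and the neighborhood tree $T$ must remain well-defined. For insertion this is immediate because we hand-pick the initial owners so demands are $\pm \alpha$ per agent; for deletion one must check that after physically removing beads the neighborhood graph and its spanning tree $T$ are still obtained by (or can be rebuilt from) an execution of the offline algorithm, so that the $O(k'' \log k'' + k')$ flow routine of the previous section still applies. Once this invariant is checked, the theorem is a direct consequence of the \BatchPath{} guarantees.
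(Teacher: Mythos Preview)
Your proposal is correct and follows essentially the same approach as the paper: perform a cheap bookkeeping step (distribute $\alpha$ beads per agent for insertion, or delete and record surpluses/deficits for deletion), then invoke \BatchPath{} with $m'=\alpha k$, inheriting both the $2(k-1)$ cut bound and the running time by direct substitution. Your explicit flagging of the invariant that $T$ must remain a valid neighborhood tree after the setup step is a point the paper glosses over, but otherwise the argument and the arithmetic match the paper exactly.
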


We now show how the running time can be more precisely understood in the special case of $\alpha = 1$ by bounding the value of $k''$ as $k$ grows.
Suppose each of the beads is associated with a scalar value, such that the position of a bead in the necklace is determined by the rank of its value\footnote{This is the case, for example, when designing fair hash maps~\cite{Shahbazi2024}.}. If the bead values are drawn independently from the same distribution $\mathcal{D}$, then the distribution of the rank of a new sample among the existing beads in the necklace is simply the uniform distribution. As such, it is well motivated to consider beads as being inserted into the necklace uniformly at random. Equivalently, agents to receive beads are chosen uniformly at random. In the absence of any further information, it is justified to assume that beads to be deleted are chosen uniformly at random as well.

Ideally (to reduce running time), we want to have as many agents receive exactly one bead as possible, since $k''$ is the number of agents that do not receive exactly one bead. We will show that this is likely to be the case. Assume $k > 101$. We sample $k$ values from the discrete uniform distribution on $[k]$. For each $i \in [k]$, let $X_i$ be the Bernoulli random variable indicating if the number $i$ is not chosen exactly once\textbackslash{}footnote\{Note that \$X\_i\$ and \$X\_j\$ are dependent random variables for \$i\$, \$j \textbackslash{}in [k]\$.\}. Let $X = \sum_i X_i$. We have $k(1 - 1.005/e) < \E[X] < k(1 - 1/e)$. For the variance, we have the following.
\begin{align*}
    \Var(X) &= \Var(k - X) \\
    &= \sum_i \Var(1 - X_i) + \smashoperator{\sum_{i,j=1, i \ne j}^k} \Cov(1 - X_i, 1 - X_j) \\
    &= k\Var(1 - X_1) + k(k-1) \Cov(1 - X_1, 1 - X_2) \\
    &< \frac{1.005k}{e} - \frac{k}{e^2} + k(k-1) \biggl(\frac{(k-1)(k-2)^{k-2}}{k^{k-1}} - \frac{(k-1)^{2k-2}}{k^{2k-2}}\biggr) \\
    &= \frac{1.005e - 1}{e^2} k + \frac{k^{k-1}(k-1)^2(k-2)^{k-2} - (k-1)^{2k-1}}{k^{2k-1}} \\
    &< \frac{1.005e - 1}{e^2} k + \frac{k^{2k-1} - (k-1)^{2k-1}}{k^{2k-1}} \\
    &= \frac{1.005e - 1}{e^2} k + 1 - \biggl(1 - \frac{1}{k}\biggr)^{2k-1} \\
    &< \frac{1.005e - 1}{e^2} k + 1 - \frac{0.995}{e^2}
\end{align*}

Then, by the Chebyshev--Cantelli inequality, we have
\[\Pr(X \geq \lambda k) < \frac{\frac{1.005e - 1}{e^2} k + 1 - \frac{0.995}{e^2}}{\frac{1.005e - 1}{e^2} k + 1 - \frac{0.995}{e^2} + \Bigl(\lambda k + \frac{1}{e} - 1\Bigr)^2} \,.\]
If we set $\lambda = 1 - 1/e$, then we have $k'' < (1-1/e)k$ with probability greater than 99.3\%.

\section{Approximate Necklace Splitting}
\label{sec:approximate}
In this section, we design an algorithm for approximate necklace splitting with two colors that works in both the static and dynamic settings. The algorithm is efficient when the number of agents $k$ is small, for example, $k=o(m)$.

\subsection{Static Setting}
We construct a red--black tree $\tree$ over the necklace $S$ with respect to the order of the beads (i.e., we assume that $S[j_1] < S[j_2]$ if $j_1 < j_2$). Let $S_1$ be the subsequence consisting of red beads and $S_2$ the subsequence consisting of blue beads with $|S_1|=m_1$ and $|S_2|=m_2$. Let $\eps \in (0,1)$ be a parameter specified by the user. Let $\intervals^0 = \emptyset$. We run the following subroutine for $k$ iterations.

In the $j$th iteration, for each color $i \in \set{1, 2}$, we use $\tree$ to obtain a uniform random sample $\samples_i^j \subseteq S_i \setminus \intervals^{j-1}$ of cardinality $O\bigl((k-j+1)^2 2^{2k} \eps^{-2} \log(2km)\bigr)$. Let $\samples^j = \samples_1^j \cup \samples_2^j$. We execute one iteration of the exact offline algorithm with $k-j+1$ agents on $S \setminus \intervals^{j-1}$. Let $I_j$ be the interval returned by the algorithm. We add the (at most) two corresponding cuts. Let $\intervals^j = \intervals^{j-1} \cup \set{I_j}$.

After the $k$th iteration, the resulting set of cuts is returned. We will henceforth refer to this procedure as \ApproxStatic{}.

\begin{theorem}
\label{thm:approxstatic}
\ApproxStatic{} produces an approximate solution with at most $2(k-1)$ cuts, such that each agent gets at least $(1-\eps)\frac{m_i}{k}$ and at most $(1+\eps)\frac{m_i}{k}$ beads of color $i$ with probability at least $1 - \frac{1}{m}$. It has time complexity $O\bigl(m+k^3 2^{2k}\eps^{-2}(\log m)^2\bigr)$\footnote{See Appendix~\ref{sec:approxstatic} for proof.}.
\end{theorem}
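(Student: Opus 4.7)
The plan is to verify the three distinct claims of the theorem separately: the cut bound, the approximation guarantee with its failure probability, and the running time. The cut bound is the easiest: each of the first $k-1$ iterations of \ApproxStatic{} carves out one sub-interval by running a single step of the exact offline algorithm, contributing at most two new cuts, while in the final iteration ($j=k$) the remaining beads form a single block and no new cut is added. This immediately yields at most $2(k-1)$ cuts.

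For the approximation guarantee, I would apply Hoeffding's inequality to the sampling in each iteration. Fix iteration $j$ and condition on the state at the start of the iteration. For each candidate interval $I \subseteq S \setminus \intervals^{j-1}$ and each color $i$, the sample-based count of color-$i$ beads in $I$ is a sum of independent Bernoulli indicators obtained via the uniform sample drawn from $\tree$, so its fractional deviation from the true count is controlled by a standard Hoeffding bound. With the prescribed sample size $|\samples_i^j| = \Theta\bigl((k-j+1)^2 \, 2^{2k} \, \eps^{-2} \log(2km)\bigr)$, a union bound over the $O(m)$ candidate intervals and over the two colors gives that, with probability at least $1 - 1/(mk)$, every interval's estimated count is within an $O(\eps / (2^k (k-j+1)))$ fraction of the true count. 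A second union bound over the $k$ iterations yields an overall failure probability of at most $1/m$.

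The main obstacle is controlling how these per-iteration sampling errors propagate through the algorithm. Because $m_i^{(j)}$, the number of color-$i$ beads remaining at the start of iteration $j$, depends on all earlier choices, a deviation of $\delta_j$ at step $j$ alters both the target count $m_i^{(j+1)}/(k-j)$ used in the next iteration and the pool of beads available to later agents. I would prove, by induction on $j$, that the absolute discrepancy between the allocated and ideal share of each of the first $j$ agents is bounded by a geometrically growing quantity with base at most $2$; this is precisely what the $2^{2k}$ factor in the sample size is designed to absorb, so that after $k$ iterations each agent's share lies in the claimed interval $[(1-\eps)m_i/k,\,(1+\eps)m_i/k]$. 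Finally, the running time follows by summing costs across iterations: the initial construction of $\tree$ takes $O(m)$, each iteration costs $O(|\samples^j|\log m)$ for sampling plus $O(|\samples^j|)$ for one step of the offline algorithm restricted to the sample, and $\sum_{j=1}^k (k-j+1)^2 = O(k^3)$ gives the final bound $O\bigl(m + k^3 \cdot 2^{2k} \cdot \eps^{-2} \cdot (\log m)^2\bigr)$.
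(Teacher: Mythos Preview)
Your proposal is correct and follows essentially the same architecture as the paper's proof: the cut bound is immediate from the iteration structure, per-iteration concentration on the sample is combined with an induction showing the allocation error grows geometrically with base $2$ (absorbed by setting the per-iteration tolerance to $\eps/2^k$, whence the $2^{2k}$ in the sample size), and the running time is tallied via $\sum_j (k-j+1)^2 = O(k^3)$. The only differences are cosmetic---the paper invokes the $\eps$-sample framework for range spaces of VC dimension $2$ rather than Hoeffding plus an explicit union bound over intervals, and two small slips in your write-up (there are $O(m^2)$ candidate intervals, not $O(m)$, and the remainder after $k-1$ rounds need not be a single block, though it still requires no new cuts) do not affect correctness.
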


\subsection{Dynamic Setting}
We now show how to adapt \ApproxStatic{} to the dynamic setting. Our algorithm can handle any type of dynamization, including relocation, insertion, and deletion.

Notice that the crux of \ApproxStatic{} relies on a small set of samples (given sufficiently small $k$). Hence, if we ensure that sampling is performed efficiently under updates, we can have a dynamic algorithm for approximate necklace splitting with an update time that only depends logarithmically on $m$. Indeed, assume that the tree $\tree$ is constructed in the preprocessing phase. It is known that a red--black tree can be updated in $O(\log m)$ time under insertion/deletion of an element. Thus, our algorithm for the dynamic necklace splitting problem consists of maintaining $\tree$ under relocations/insertions/deletions (a relocation can be handled as a deletion followed by an insertion). When the user requests the set of cuts, we execute \ApproxStatic{} as a subroutine using the updated $\tree$. We will henceforth refer to this procedure as \ApproxDynamic{}.

When a batch of $k$ updates is encountered, $\tree$ is updated in $O(k \log m)$ time, and the new set of cuts can be constructed in $O\bigl(k^3 2^{2k} \eps^{-2} (\log m)^2\bigr)$ time. The overall running time of \ApproxDynamic{} is thus $O\bigl(k^2 2^{2k} \eps^{-2} (\log m)^2 + \log m\bigr)$ per bead. If we do not often need to produce the actual set of cuts, we only need $O(\log m)$ time per bead to maintain $\tree$.

\begin{theorem}
% \label{thm:approx_dynamic}
\ApproxDynamic{} produces an approximate solution with at most $2(k-1)$ cuts, such that each agent gets at least $(1-\eps) \frac{m_i}{k}$ and at most $(1+\eps) \frac{m_i}{k}$ beads of color $i$ with probability at least $1 - \frac{1}{m}$. It has time complexity $O\bigl(k^2 2^{2k} \eps^{-2} (\log m)^2 + \log m\bigr)$.
\end{theorem}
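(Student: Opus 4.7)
The plan is to reduce both correctness and the running time of \ApproxDynamic{} to properties already established for \ApproxStatic{} plus standard facts about red--black trees. First I would argue correctness. By construction, whenever the user requests a set of cuts, \ApproxDynamic{} simply invokes \ApproxStatic{} on the necklace $S$ in its current state, using the incrementally maintained red--black tree $\tree$ as the sampling data structure. Since $\tree$ represents exactly the current $S$ after all relocations/insertions/deletions, the output of this invocation is distributed identically to running \ApproxStatic{} from scratch on the current necklace. Consequently, Theorem~\ref{thm:approxstatic} applies verbatim: the returned allocation has at most $2(k-1)$ cuts, and each agent receives between $(1-\eps) m_i/k$ and $(1+\eps) m_i/k$ beads of color $i$ with probability at least $1 - 1/m$.

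For the running time I would separate the cost of \emph{maintaining} $\tree$ from the cost of \emph{answering} a query. Because $\tree$ is a red--black tree over the $m$ beads, each single relocation, insertion, or deletion translates to $O(1)$ red--black tree operations, each costing $O(\log m)$; a relocation is handled as a deletion followed by an insertion, so it is still $O(\log m)$. Thus maintenance costs $O(\log m)$ per bead regardless of batching, giving the additive $\log m$ term.

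Next I would analyze the cost of producing a fresh set of cuts on demand. The key observation is that once $\tree$ is already built, the $O(m)$ preprocessing term that appeared in the complexity of \ApproxStatic{} in Theorem~\ref{thm:approxstatic} is no longer needed: the uniform samples $\samples_i^j$ are drawn through $\tree$ in time proportional to their size times $O(\log m)$, and the single call to the offline subroutine per iteration operates only on the sampled beads. The remainder of the analysis is identical to that of \ApproxStatic{}, so constructing a complete set of cuts costs $O\bigl(k^3 2^{2k} \eps^{-2} (\log m)^2\bigr)$. Amortizing one such query over a batch of $k$ updates divides this cost by $k$, yielding $O\bigl(k^2 2^{2k} \eps^{-2} (\log m)^2\bigr)$ per bead. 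Adding the maintenance term gives the claimed overall bound of $O\bigl(k^2 2^{2k} \eps^{-2} (\log m)^2 + \log m\bigr)$ per bead.

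The main obstacle is the second step: I must justify carefully that re-invoking \ApproxStatic{} on the already-maintained $\tree$ really avoids the $O(m)$ preprocessing. This requires checking that (i) the sampling step in each of the $k$ iterations of \ApproxStatic{} accesses $S$ only through $\tree$, so its cost scales with the sample sizes times $\log m$ rather than with $m$, and (ii) the intervals $\intervals^{j-1}$ that have to be excluded from the sampling domain can be encoded inside $\tree$ (for instance via subtree-range queries that skip already-allocated positions) without incurring an $O(m)$ scan. Once both points are verified, the per-bead time bound follows immediately from the amortization over $k$-sized batches, and combining it with the correctness argument from Theorem~\ref{thm:approxstatic} completes the proof.
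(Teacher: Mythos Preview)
Your proposal is correct and matches the paper's own argument essentially point for point: the paper also derives correctness by direct appeal to Theorem~\ref{thm:approxstatic} (since \ApproxDynamic{} just runs \ApproxStatic{} on the maintained $\tree$), charges $O(\log m)$ per bead for red--black tree maintenance, observes that the $O(m)$ term in Theorem~\ref{thm:approxstatic} disappears because $\tree$ is already built, and amortizes the $O\bigl(k^3 2^{2k}\eps^{-2}(\log m)^2\bigr)$ query cost over a batch of $k$ updates. Your added discussion of why sampling through $\tree$ avoids any $O(m)$ scan and how to exclude $\intervals^{j-1}$ is a bit more explicit than the paper, but the approach is the same.
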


\bibliographystyle{plainurl}
\bibliography{main}

\begin{thebibliography}{10}

\bibitem{Alon1987}
Noga Alon.
\newblock Splitting necklaces.
\newblock {\em Advances in Mathematics}, 63(3):247--253, March 1987.
\newblock \href {https://doi.org/10.1016/0001-8708(87)90055-7}
  {\path{doi:10.1016/0001-8708(87)90055-7}}.

\bibitem{Alon2021}
Noga Alon and Andrei Graur.
\newblock Efficient splitting of necklaces.
\newblock In {\em 48th International Colloquium on Automata, Languages, and
  Programming (ICALP 2021)}, volume 198 of {\em ICALP '21}, pages 14:1--14:17,
  Virtual (Glasgow, Scotland), July 2021. Schloss Dagstuhl -- Leibniz-Zentrum
  f{\"u}r Informatik.
\newblock \href {https://doi.org/10.4230/LIPIcs.ICALP.2021.14}
  {\path{doi:10.4230/LIPIcs.ICALP.2021.14}}.

\bibitem{Alon1986}
Noga Alon and Douglas~B. West.
\newblock The {Borsuk}-{Ulam} theorem and bisection of necklaces.
\newblock {\em Proceedings of the American Mathematical Society},
  98(4):623--628, 1986.
\newblock \href {https://doi.org/10.1090/s0002-9939-1986-0861764-9}
  {\path{doi:10.1090/s0002-9939-1986-0861764-9}}.

\bibitem{Anthony1999}
Martin Anthony and Peter~L. Bartlett.
\newblock {\em Neural Network Learning: Theoretical Foundations}.
\newblock Cambridge University Press, November 1999.
\newblock \href {https://doi.org/10.1017/cbo9780511624216}
  {\path{doi:10.1017/cbo9780511624216}}.

\bibitem{Benade2018}
Gerdus Benade, Aleksandr~M. Kazachkov, Ariel~D. Procaccia, and
  Christos-Alexandros Psomas.
\newblock How to make envy vanish over time.
\newblock In {\em Proceedings of the 2018 ACM Conference on Economics and
  Computation}, EC ’18, pages 593--610, Ithaca, New York, USA, June 2018.
  ACM.
\newblock \href {https://doi.org/10.1145/3219166.3219179}
  {\path{doi:10.1145/3219166.3219179}}.

\bibitem{Bhatt1982}
Sandeep~N. Bhatt and Charles~E. Leiserson.
\newblock How to assemble tree machines (extended abstract).
\newblock In {\em Proceedings of the Fourteenth Annual ACM Symposium on Theory
  of Computing}, STOC ’82, pages 77--84, San Francisco, California, USA, May
  1982. {ACM}.
\newblock \href {https://doi.org/10.1145/800070.802179}
  {\path{doi:10.1145/800070.802179}}.

\bibitem{Borsuk1933}
Karol Borsuk.
\newblock {Drei} {S\"{a}tze} \"{u}ber die n-dimensionale euklidische
  {Sph\"{a}re}.
\newblock {\em Fundamenta Mathematicae}, 20(1):177--190, 1933.
\newblock \href {https://doi.org/10.4064/fm-20-1-177-190}
  {\path{doi:10.4064/fm-20-1-177-190}}.

\bibitem{Dial1969}
Robert~B. Dial.
\newblock Algorithm 360: shortest-path forest with topological ordering [h].
\newblock {\em Communications of the ACM}, 12(11):632--633, November 1969.
\newblock \href {https://doi.org/10.1145/363269.363610}
  {\path{doi:10.1145/363269.363610}}.

\bibitem{Goldberg1985}
Charles~H. Goldberg and Douglas~B. West.
\newblock Bisection of circle colorings.
\newblock {\em SIAM Journal on Algebraic Discrete Methods}, 6(1):93--106,
  January 1985.
\newblock \href {https://doi.org/10.1137/0606010} {\path{doi:10.1137/0606010}}.

\bibitem{Har-Peled2011}
Sariel Har-Peled.
\newblock {\em Geometric Approximation Algorithms}, volume 173 of {\em
  Mathematical Surveys and Monographs}.
\newblock {AMS}, June 2011.

\bibitem{He2019}
Jiafan He, Ariel~D. Procaccia, Alexandros Psomas, and David Zeng.
\newblock Achieving a fairer future by changing the past.
\newblock In {\em Proceedings of the Twenty-Eighth International Joint
  Conference on Artificial Intelligence}, IJCAI-2019, pages 343--349, Macao,
  China, August 2019. International Joint Conferences on Artificial
  Intelligence Organization.
\newblock \href {https://doi.org/10.24963/ijcai.2019/49}
  {\path{doi:10.24963/ijcai.2019/49}}.

\bibitem{JafarnejadGhomi2017}
Einollah Jafarnejad~Ghomi, Amir Masoud~Rahmani, and Nooruldeen Nasih~Qader.
\newblock Load-balancing algorithms in cloud computing: A survey.
\newblock {\em Journal of Network and Computer Applications}, 88:50--71, June
  2017.
\newblock \href {https://doi.org/10.1016/j.jnca.2017.04.007}
  {\path{doi:10.1016/j.jnca.2017.04.007}}.

\bibitem{Kash2014}
Ian Kash, Ariel~D. Procaccia, and Nisarg Shah.
\newblock No agent left behind: Dynamic fair division of multiple resources.
\newblock {\em Journal of Artificial Intelligence Research}, 51:579–603,
  November 2014.
\newblock \href {https://doi.org/10.1613/jair.4405}
  {\path{doi:10.1613/jair.4405}}.

\bibitem{Kraska2018}
Tim Kraska, Alex Beutel, Ed~H. Chi, Jeffrey Dean, and Neoklis Polyzotis.
\newblock The case for learned index structures.
\newblock In {\em Proceedings of the 2018 International Conference on
  Management of Data}, SIGMOD '18, pages 489--504. {ACM}, May 2018.
\newblock \href {https://doi.org/10.1145/3183713.3196909}
  {\path{doi:10.1145/3183713.3196909}}.

\bibitem{Mehrabi2021}
Ninareh Mehrabi, Fred Morstatter, Nripsuta Saxena, Kristina Lerman, and Aram
  Galstyan.
\newblock A survey on bias and fairness in machine learning.
\newblock {\em {ACM} Computing Surveys}, 54(6):1--35, July 2021.
\newblock \href {https://doi.org/10.1145/3457607} {\path{doi:10.1145/3457607}}.

\bibitem{Pessach2022}
Dana Pessach and Erez Shmueli.
\newblock A review on fairness in machine learning.
\newblock {\em {ACM} Computing Surveys}, 55(3):1--44, February 2022.
\newblock \href {https://doi.org/10.1145/3494672} {\path{doi:10.1145/3494672}}.

\bibitem{Sabek2022}
Ibrahim Sabek, Kapil Vaidya, Dominik Horn, Andreas Kipf, Michael Mitzenmacher,
  and Tim Kraska.
\newblock Can learned models replace hash functions?
\newblock {\em Proceedings of the {VLDB} Endowment}, 16(3):532--545, November
  2022.
\newblock \href {https://doi.org/10.14778/3570690.3570702}
  {\path{doi:10.14778/3570690.3570702}}.

\bibitem{Shahbazi2023}
Nima Shahbazi, Yin Lin, Abolfazl Asudeh, and {H.\,V.} Jagadish.
\newblock Representation bias in data: A survey on identification and
  resolution techniques.
\newblock {\em {ACM} Computing Surveys}, 55(13s):1--39, July 2023.
\newblock \href {https://doi.org/10.1145/3588433} {\path{doi:10.1145/3588433}}.

\bibitem{Shahbazi2024}
Nima Shahbazi, Stavros Sintos, and Abolfazl Asudeh.
\newblock {FairHash}: A fair and memory/time-efficient hashmap.
\newblock {\em Proceedings of the ACM on Management of Data}, 2(3):1--29, May
  2024.
\newblock \href {https://doi.org/10.1145/3654939} {\path{doi:10.1145/3654939}}.

\end{thebibliography}

\appendix

\section{Offline Algorithm is Optimal}
\label{sec:offline_optimal}
\begin{proposition}
    For every $k$, there exists a necklace that requires exactly $2(k-1)$ cuts.
\end{proposition}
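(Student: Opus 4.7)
The plan is to exhibit a concrete family of necklaces, parameterized by $k$, that saturates the $2(k-1)$ cut bound of the offline algorithm. For each $k$, I would take the two-color necklace $S = R^{k} B^{k}$ consisting of $k$ red beads followed by $k$ blue beads. Since $m_1 = m_2 = k$, the divisibility assumption $k \mid m_i$ holds trivially, and every valid allocation must give each of the $k$ agents exactly one red bead and exactly one blue bead.

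For the lower bound of $2(k-1)$ cuts, the key observation is that two beads lying in the same interval of the allocation must belong to the same agent. The $k$ red beads of $S$ are contiguous and must be distributed to $k$ distinct agents, so every pair of adjacent red beads is owned by two different agents; hence each of the $k-1$ positions strictly between consecutive red beads must be a cut in $P$. Applying the identical argument to the contiguous block of $k$ blue beads yields another $k-1$ forced cut positions, disjoint from the first set. Therefore $|P| \geq 2(k-1)$.

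For the matching upper bound, I would invoke the analysis of the offline algorithm recalled in Section~\ref{sec:offline} (due to Alon and Graur~\cite{Alon2021}), which guarantees $|P| \leq 2(k-1)$ on any input. Combining the two bounds gives exactly $2(k-1)$ cuts for $S = R^{k} B^{k}$, proving the proposition.

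There is no real technical obstacle here; the only points requiring mild care are (i) verifying that the divisibility condition is met so that the problem is even well-posed for this necklace, and (ii) noting that the argument uses only the $k-1$ gaps strictly inside each monochromatic block, so the cyclic-vs-linear indexing convention from the footnote on $[j_1, j_2]$ is irrelevant -- wraparound can only add potential cuts, never remove any of the $2(k-1)$ forced ones.
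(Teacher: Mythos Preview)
Your argument is correct. The instance $S = R^{k}B^{k}$ works exactly as you describe: each agent must receive precisely one red bead, so any two adjacent red beads lie in different intervals, forcing all $k-1$ internal red--red gaps to be cuts; the same holds for the blue block, and the two sets of forced positions are disjoint. Together with the $2(k-1)$ upper bound from the offline algorithm this pins down the optimum.

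The paper takes a different route. It also starts from a necklace consisting of a red block followed by a blue block, but keeps $m$ generic (so each agent receives $m/(2k)$ beads of each color rather than a single bead) and then argues by transforming an \emph{arbitrary} optimal cut set: walking in from each end, it repeatedly merges or shifts cuts so that each successive interval becomes a monochromatic block of length exactly $m/(2k)$, checking that the total number of cuts never changes during this normalization and terminates at the canonical $2(k-1)$-cut configuration. Your proof avoids this transformation entirely by specializing to the extremal case $m=2k$, where ownership of every single bead is determined up to a permutation of agents and the lower bound becomes a one-line pigeonhole count. What you gain is brevity and transparency; what the paper's argument gains is that it shows the same family $R^{m/2}B^{m/2}$ is tight for all $m$, not just the minimal choice, though for the proposition as stated your version is entirely sufficient.
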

\begin{proof}
    The $k=1$ case is trivial, so we assume $k>1$. Consider the necklace with $\frac{m}{2}$ red beads followed by $\frac{m}{2}$ blue beads. Suppose we have an optimal set of cuts of the necklace. The first interval in the resulting allocation consists of at most $\frac{m}{2k}$ red beads. Without loss of generality, assume this interval belongs to $A_1$. If it has exactly $\frac{m}{2k}$ beads, we move to the next interval (without loss of generality, owned by $A_2$). Else, we check if the first two intervals combined have more than $\frac{m}{2k}$ beads. If so, we can move the first cut to the right until the first interval has exactly $\frac{m}{2k}$ beads. To maintain the fairness constraint, we reassign all the red beads of $A_1$ in the rest of the necklace to $A_2$ (without adding any cuts). We then move to the next interval. Instead, if the first two intervals combined have at most $\frac{m}{2k}$ beads, we remove the cut between them and add one cut within another interval with red beads belonging to $A_1$, maintaining the overall number of cuts.

    We repeat this process from both the left (red) and right (blue) ends of the necklace until we reach the center interval, which consists of $\frac{m}{2k}$ red beads and $\frac{m}{2k}$ blue beads. We have exactly $2(k-1)$ cuts, and since the number of cuts remained constant throughout this procedure, the original optimal set of cuts was of size $2(k-1)$.
\end{proof}

\section{NP-completeness of \MinNodeMaxFlow{}}
\label{sec:minnodenaxflow}

\begin{proof}[Proof of Proposition~\ref{prop:minnodemaxflow}]
    First, we show that \MinNodeMaxFlow{} is in NP. We are given a flow network and a limit $\kappa$ on the number of active nodes as input and a proposed flow as a certificate. We can compute the value of a max flow on the graph (in polynomial time). Then we can easily verify whether the proposed flow has that value and satisfies the flow constraints and whether the number of active nodes is at most $\kappa$. \MinNodeMaxFlow{} is in NP.
    
    Next, we show that \MinNodeMaxFlow{} is NP-hard, via reduction from \textsf{Vertex Cover}. Given an undirected graph $\Gamma = (V_\Gamma, E_\Gamma)$ and parameter $\kappa$, we want to determine whether $\Gamma$ has a vertex cover of size at most $\kappa$. We construct an auxiliary, directed graph $\Gamma' = (V_{\Gamma'}, E_{\Gamma'})$ containing the same vertices in addition to a source node $\sigma$, a sink node $\tau$, and an extra node $\overline{uv}$ for every edge $(u,v) \in E_\Gamma$. We replace each edge $(u,v) \in E_\Gamma$ with two infinite-capacity edges: $(\overline{uv}, u)$ and $(\overline{uv}, v)$. We add a unit-capacity edge from $\sigma$ to each extra node $\overline{uv}$ and an infinite-capacity edge from each original node $u$ to $\tau$.
    
    If there is a vertex cover of size $\kappa$ in $\Gamma$, then there is a max flow on $\Gamma'$ with $|E_\Gamma| + \kappa$ active nodes. Conversely, suppose there is a max flow on $\Gamma'$ with $\kappa$ active nodes. If any node $\overline{uv}$ is sending fractional flow to each of its neighbors, we can reroute its flow to go to just one neighbor without changing the number of active nodes. We would need to do this at most $|E_\Gamma|$ times, so we can always efficiently find an integral solution, given the initial solution. Once we obtain an integral solution, we have a corresponding vertex cover of size $\kappa$. Therefore, \MinNodeMaxFlow{} is NP-hard.
\end{proof}

\section{Correctness and Time Complexity of \ApproxStatic{}}
\label{sec:approxstatic}

\begin{definition}[$\eps$-sample~{\cite[Chapter~5]{Har-Peled2011}}]
	Let $(X,\mathcal{R})$ be a set system. For any $\eps \in [0,1]$, a subset $C \subseteq X$ is an \emph{$\eps$-sample} for $(X,\mathcal{R})$ if for every $\rho \in \mathcal{R}$, we have
	\[\biggl|\frac{|\rho|}{|X|} - \frac{|C \cap \rho|}{|C|}\biggr| \leq \eps \,.\]
\end{definition}

\begin{proof}[Proof of Theorem~\ref{thm:approxstatic}]
We can construct $\tree$ in $O(m)$ time. We can retrieve an element from $\tree$ in $O(\log m)$ time, so we can construct the set $\samples^j$ in $O(|\samples^j|\log m)=O\bigl(k^2 2^{2k}\eps^{-2}(\log m)^2\bigr)$ time. Executing the exact offline algorithm for one iteration takes $O(|\samples^{{j}}|) = O\bigl(k^2 2^{2k}\eps^{-2}\log m\bigr)$ time. We run the subroutine for $k$ iterations, so the total running time of \ApproxStatic{} is $O\bigl(m+k^3 2^{2k}\eps^{-2}(\log m)^2\bigr)$.
It is clear that, by construction, the algorithm produces at most $2(k-1)$ cuts. What remains is to show that the resulting assignment of beads is approximately fair.

For any subset of beads $X$ in the necklace, consider the set system $(X, \mathcal{R}_S)$ of VC dimension $2$, where $\mathcal{R}_S$ is the family of all possible intervals (sets of contiguous beads) in $S$. Then, with probability at least $1 - \varphi$, a uniform random subset $C \subseteq X$ of cardinality $O\bigl(\eps^{-2}\log\bigl(\varphi^{-1}\bigr)\bigr)$ is an $\eps$-sample for $(X, \mathcal{R}_S)$~\cite{Anthony1999}.

Let $\epss = \frac{\eps}{2^k}$. By definition, the probability that $\samples_i^j$ is a $\frac{\epss}{k-j+1}$-sample is at least $1-\frac{1}{2km}$. We have that in the $j$th iteration, both $\samples_1^j$ and $\samples_2^j$ are $\frac{\epss}{k-j+1}$-samples for $(S_1\setminus \intervals^{j-1}$, $\mathcal{R}_S)$ and $(S_2\setminus \intervals^{j-1}, \mathcal{R}_S)$, respectively, with probability at least $1-\frac{1}{km}$. Via a union bound over the iterations, with probability at least $1 - \frac{1}{m}$, we have that $\samples_i^j$ is a $\frac{\epss}{k-j+1}$-sample for $(S_i\setminus \intervals^{j-1}, \mathcal{R}_S)$ for all $j \in [k]$ and $i \in \set{1, 2}$.

We now prove by induction that for each iteration $j\in[k]$ and color $i\in\set{1, 2}$, we have
\[\frac{1-(2^j-1)\epss}{k} \leq \frac{|S_i \cap I_j|}{|S_i|} \leq \frac{1+(2^j-1)\epss}{k}\]
if $\samples_i^j$ is an $\frac{\epss}{k-j+1}$-sample for $S_i\setminus \intervals^{j-1}$.
First we consider the base case: $j=1$. By definition, in the first iteration, $\samples_i^1$ is a $\frac{\epss}{k-j+1}$-sample for $S_i$. By the correctness of the exact offline algorithm on $\samples^1$, for each $i \in \set{1, 2}$, we have that $|\samples_i^1 \cap I_1|/|\samples_i^1| = 1/k$. Thus, we have \[\frac{1-\epss}{k} \leq \frac{|S_i\cap I_1|}{|S_i|} \leq \frac{1+\epss}{k} \,.\]

For the inductive step, assume that the claim holds for all $j < \zeta$. We will show that it holds for $j = \zeta$. By the correctness of the exact offline algorithm on $\samples^1$ we have that $|\samples_i^\zeta\cap I_\zeta|/|\samples_i^\zeta| = 1/(k-\zeta+1)$. Since $\samples_i^\zeta$ is a $\frac{\epss}{k-\zeta+1}$-sample for $S_i\setminus\intervals^{\zeta-1}$, we have
\[\frac{1-\epss}{k-\zeta+1} \leq \frac{|(S_i\setminus \intervals^{\zeta-1})\cap I_\zeta|}{|S_i\setminus\intervals^{\zeta-1}|}\leq \frac{1+\epss}{k-\zeta+1} \,.\]
Notice that $(S_i\setminus\intervals^{\zeta-1})\cap I_\zeta=S_i\cap I_\zeta$, so we can rewrite this as
\[\frac{1-\epss}{k-\zeta+1}|S_i\setminus\intervals^{\zeta-1}| \leq |S_i\cap I_\zeta|\leq \frac{1+\epss}{k-\zeta+1}|S_i\setminus\intervals^{\zeta-1}| \,.\]
We first prove the left inequality of the claim. 
\begin{align*}
    |S_i\cap I_\zeta|
    &\geq \frac{1-\epss}{k-\zeta+1}\bigl\lvert S_i \setminus \intervals^{\zeta-1}\bigr\rvert \\
    &= \frac{1-\epss}{k-\zeta+1} \biggl(m_i-\sum_{j=1}^{\zeta-1}|S_i\cap I_j|\biggr) \\
    &\geq \frac{1-\epss}{k-\zeta+1} \Biggl(1-\sum_{j=1}^{\zeta-1}\frac{1+\bigl(2^j-1\bigr)\epss}{k}\Biggr) m_i \\
    &= (1-\epss) \biggl(1-\frac{\epss}{k-\zeta+1}\sum_{j=1}^{\zeta-1}\bigl(2^j-1\bigr)\biggr) \frac{m_i}{k} \\
    &= (1-\epss) \Biggl(1-\frac{\bigl(-\zeta+2^\zeta-1\bigr) \epss}{k-\zeta+1}\Biggr) \frac{m_i}{k} \\
    &\geq (1-\epss) \bigl(1-\bigl(-\zeta+2^\zeta-1\bigr)\epss\bigr) \frac{m_i}{k} \\
    &\geq \bigl(1-\epss+\zeta\epss-2^\zeta\epss+\epss\bigr)\frac{m_i}{k} \\
    &\geq \bigl(1-\bigl(2^\zeta-1\bigr)\epss\bigr)\frac{|S_i|}{k}
\end{align*}
Similarly, we prove the right inequality.
\begin{align*}
    |S_i\cap I_\zeta|
    &\leq \frac{1+\epss}{k-\zeta+1} \bigl\lvert S_i \setminus \intervals^{\zeta-1}\bigr\rvert \\
    &= \frac{1+\epss}{k-\zeta+1}\biggl(m_i-\sum_{j=1}^{\zeta-1}|S_i\cap I_j|\biggr) \\
    &\leq \frac{1+\epss}{k-\zeta+1} \Biggl(1-\sum_{j=1}^{\zeta-1}\frac{1-\bigl(2^j-1\bigr)\epss}{k}\Biggr) m_i \\
    &= (1+\epss)\Biggl(1+\frac{\bigl(-\zeta+2^\zeta-1\bigr) \epss}{k-\zeta+1} \Biggr) \frac{m_i}{k} \\
    &\leq (1+\epss) \bigl(1+\bigl(-\zeta+2^\zeta-1\bigr)\epss\bigr) \frac{m_i}{k} \\
    &= \bigl(1+2^\zeta\epss-\zeta\epss + \bigl(2^\zeta-\zeta-1\bigr)\epss^2\bigr)\frac{m_i}{k} \\
    &< \bigl(1+2^\zeta\epss-\zeta\epss + \epss\bigr)\frac{m_i}{k} \\
    &\leq \bigl(1+\bigl(2^\zeta-1\bigr)\epss\bigr) \frac{|S_i|}{k}
\end{align*}
By induction, the claim holds. Finally, substituting $\epss = \frac{\eps}{2^k}$, we have
\[\frac{(1-\eps)|S_i|}{k} < |S_i\cap I_j| < \frac{(1+\eps)|S_i|}{k} \,.\qedhere\]
\end{proof}

\end{document}